\DeclareMathAlphabet{\mathpzc}{OT1}{pzc}{m}{it}
\newcommand{\mrk}[1]{{#1}'}
\newcommand{\oldstack}[3]{%
{\ifthenelse{\equal{#1}{1}}{%
\mrk{#2}
}%
{#2}}_{#3}%
}
\newcommand{\stack}[3]{%
[%
{\ifthenelse{\equal{#1}{1}}{%
\mrk{#2}
}%
{#2}}\ {#3}%
]%
}
\newcommand{\va}[1]{\stackrel{#1}{\longrightarrow}}
\newcommand{\ourpath}[1]{\stackrel{#1}{\leadsto}}
\newcommand{\flush}[1]{\stackrel{#1}{\Longrightarrow}}
\newcommand{\nont}[4]{\langle^{#1} #2, #3 {}^{#4} \rangle}
\newcommand{\chain}[3]{\langle^{#1} #2 {}^{#3} \rangle}
\newcommand{\config}[2]{\langle #1\ , \ #2 \rangle}
\newcommand{\symb}[1]{\mathop{symbol}(#1)}
\newcommand{\state}[1]{\mathop{state}(#1)}
\newcommand{\tp}[1]{\mathop{top}(#1)}
\newcommand{\inpt}[1]{\mathop{input}(#1)}
\newcommand{\stpair}[2]{( {#1},{#2} )} 
\newcommand{\pair}[2]{\langle {#1},{#2}\rangle} 
\newcommand{\der}[1]{ \stackrel {{#1}} \Rightarrow}
\newcommand{\derh}[1]{ \stackrel {[{#1}]} \Rightarrow}
\newcommand{\comp}[1]{ \stackrel {{#1}} \vdash } 
\newcommand{\comph}[1]{ \stackrel {[{#1}]} \vdash } 
\tikzset{
	path/.style={dotted},
	every edge/.style={draw,solid},
	normal/.style={solid},
}
\begin{document}

\title{Precedence Automata and Languages\thanks{This is an extended version
of the paper which appeared in Proceedings of CSR 2011, 6th International Computer Science Symposium in Russia,
Lecture Notes in Computer Science, vol. 6651, pp. 291-304, 2011. 
In particular, Theorem~\ref{th:nondet} has been corrected and a complete proof is given in Appendix.}}
\author{Violetta Lonati\inst{1}, Dino Mandrioli\inst{2}, Matteo
Pradella\inst{2}}

\institute{
  DSI - Universit\`a degli Studi di Milano,
  via Comelico 39/41, Milano, Italy\\
  \email{lonati@dsi.unimi.it}
\and
  DEI - Politecnico di Milano,
  via Ponzio 34/5, Milano, Italy\\
  \email{\{dino.mandrioli, matteo.pradella\}@polimi.it}
}

\maketitle

\begin{abstract}
Operator precedence grammars define a classical Boolean and deterministic
context-free family (called Floyd languages or FLs). FLs have been shown to
strictly include the well-known visibly pushdown languages, and enjoy the same
nice closure properties. We introduce here Floyd automata, an equivalent
operational formalism for defining FLs. This also permits to extend the class to
deal with infinite strings to perform for instance model checking.\\
\\
{\bf Keywords: }
Operator precedence languages, Deterministic Context-Free languages,
Omega languages, Pushdown automata.
\end{abstract}

\section{Introduction}
\label{sec:intro}

The history of formal language theory has always paired two main and
complementary formalisms to define and process --not only formal-- languages:
grammars or syntaxes and abstract machines or automata. The power and the
complementary benefits of these two formalisms are so evident and well-known that
it is certainly superfluous to remind them here. Also universally known are the
conceptual relevance and practical impact of the family of context-free
languages and the corresponding grammars paired with pushdown automata.

Among the many subfamilies that have been introduced throughout the last decades
with various goals, operator precedence grammars, herewith renamed Floyd grammars
(FGs) in honor of their inventor~\cite{Floyd1963}, represent a pioneering model mainly aimed
at deterministic --and therefore efficient-- parsing. Visibly pushdown languages
(VPLs) are a much more recent subfamily of (deterministic) context-free
languages introduced in the seminal paper~\cite{AluMad04} with the goal of extending the
typical closure properties of regular languages to larger families of languages
accepted by infinite-state machines; a major practical result is the possibility
of extending such powerful verification technique as model checking beyond the
scope of finite state machines. Along the usual tradition, VPLs have been
characterized both in terms of abstract machines, the visibly pushdown automata
(VPAs), and by means of a suitable subclass of context-free grammars.

Rather surprisingly, instead, investigation of the basic --and nice, indeed--
properties of FGs has been suspended, probably as a consequence of the advent of
other, more general, parsing techniques, such as LR parsing~\cite{GruneJacobs:08}.
Although FGs generate obviously a subclass of deterministic CF languages and
therefore can be parsed by any deterministic pushdown machine, typically a
shift-reduce one~\cite{GruneJacobs:08}, we are not aware of a family of automata that perfectly
matches the generative power of this class of grammars.
On the other hand, operator precedence parsers are
still used today, thanks to their elegant simplicity and efficiency. For
instance,
they are present in Parrot, Perl 6's virtual machine, as part of the Parser Grammar
Engine (PGE); in GCC's C and C++ hand-coded parsers, for managing arithmetic
expressions.\footnote{The interested reader may find more information at {\em
http://gcc.gnu.org}, and {\em http://www.parrot.org}, respectively.}

Quite recently we realized strong relations between these two seemingly
unrelated families of languages; precisely we showed that: VPLs are a proper
subclass of languages defined by FGs (i.e. Floyd Languages, or FLs in short), and coincide with those languages that can be generated by FGs
characterized by a well precise shape of operator precedence matrix (OPM). The
inclusion relation is effective in that a FG can be algorithmically derived form
a VPA and conversely a VPA can be obtained by a FG whose OPM satisfies the
restriction \cite{CrespiMandrioliWORDS2009}.

FLs enjoy all typical closure properties of regular languages that motivated the
study of VPLs and other related families \cite{Berstel:2001:BGT,conf/mfcs/NowotkaS07,caucal:DSP:2008:1743}. Precisely, closure w.r.t.
Boolean operations was proved a long time ago in \cite{Crespi-ReghizziMM1978}, whereas closure under
concatenation, Kleene star, and other typical algebraic operations has been
investigated only recently under the novel interest ignited by the above remark
\cite{Crespi-ReghizziM10}. Thus, the old-fashioned FLs turned out to be the
largest known class of
deterministic context-free languages that enjoy closure under all traditional
language operations.  Another reason why, in our opinion, FLs are far from
obsolete and uninteresting in these days is that, unlike most other
deterministic languages of practical use, they can be parsed not necessarily
left-to-right, thus offering interesting opportunities, e.g., to exploit
parallelism and incrementality \cite{GruneJacobs:08}.

In this paper we provide another missing tile of the ``old and new puzzle'',
namely we introduce a novel class of stack-based automata perfectly carved on
the generation mechanism of FGs, which too we name in honor of Robert Floyd. Not
surprisingly they inherit some features of VPAs (mainly a clear separation
between push and pop operations) and maintain some typical behavior of
shift-reduce parsing algorithms; however, they also exhibit some distinguishing
features and imply some non-trivial technicalities to derive them automatically
from FGs and conversely.

The availability of a precise family of automata allows to apply to FLs the now
familiar $\omega$-extension --a further extension of Kleene $*$ operation--,
i.e., the definition of languages of infinite strings and the various criteria
for their acceptance or rejection by recognizing devices. $\omega$-languages are
now more and more important to deal with never-ending computations such as
operating systems, web-services, embedded applications, etc. Thus, we also
introduce the $\omega$-version of FLs and we show their potential in terms of
modeling the behavior of some realistic systems.

The paper is structured as follows: Section~\ref{sec:prelim} recalls basic
definitions on Floyd's grammars;
Section~\ref{sec:aut} introduces Floyd
automata (FAs) and shows that, as well as FSMs and VPAs, but unlike pushdown automata,
their deterministic version is not less powerful than the nondeterministic
counterpart; Section~\ref{sec:aut_gr} provides effective constructions to derive
a FA from a FG and conversely; Section~\ref{sec:omega} extends the definition of
FLs to sets of infinite strings by applying to FAs the well-known concepts of
$\omega$-behavior and acceptance; finally Section~\ref{sec:conclusions} draws
some conclusions.

\section{Preliminaries}
\label{sec:prelim}



Let $\Sigma$ be an alphabet. The empty string is denoted $\varepsilon$.
A \textit{context-free} (CF) grammar is  a 4-tuple $G=(N, \Sigma, P, S)$, where $N$ is the nonterminal alphabet, $P$ the rule (or production) set, and $S$ the axiom. 
An \textit{empty rule} has $\varepsilon$ as the right hand side (r.h.s.). 
A \textit{renaming rule} has one nonterminal as r.h.s. A grammar is \textit{reduced} if every rule can be used to generate some string in $\Sigma^\ast$. It is \textit{invertible} if no two rules have identical r.h.s.

The following naming convention will be adopted, unless otherwise specified: lowercase Latin letters $a,b,\ldots$ denote  terminal characters; uppercase Latin letters $A,B, \ldots$ denote nonterminal characters; letters $u,v,\ldots$ denote terminal strings; and Greek letters $\alpha, \ldots, \omega$ denote strings over $\Sigma \cup N$. The strings may be empty, unless stated otherwise.


A rule is in \textit{operator form} if its r.h.s has no adjacent nonterminals; an \textit{operator grammar} (OG) contains just such rules. Any CF grammar admits an equivalent OG, which can be also assumed to be invertible \cite{Harrison78,Salomaa73}.




The  coming definitions for operator precedence grammars \cite{Floyd1963}, here renamed \textit{Floyd Grammars} (FG), 
are from \cite{Crespi-ReghizziMM1978}. 
We refer the reader unfamiliar with precedence grammars and parsing techniques
to
\cite{GruneJacobs:08}, that contains an easily readable, practical description
of FGs.

For an OG $G$ and a nonterminal $A$, the \textit{left and right terminal sets} are
\[
  \mathcal{L}_G(A)  = \{a\in\Sigma \mid A \stackrel \ast \Rightarrow B a\alpha\} \qquad
  \mathcal{R}_G(A) =  \{a\in\Sigma \mid A \stackrel \ast \Rightarrow \alpha a B\}
\]
where $B\in N\cup\{\varepsilon\}$ and $\Rightarrow$ denotes the derivation
relation. 
%
%
The grammar name $G$ will be omitted unless necessary to prevent confusion.

R. Floyd took inspiration from the traditional notion of precedence between arithmetic operators in order to define a broad class of languages, such that the shape of the derivation tree is solely determined by a binary relation between terminals that are consecutive, or become consecutive after a bottom-up reduction step.

For an OG $G$, let $\alpha, \beta$ range over  $(N \cup \Sigma)^{\ast}$ and $a,b\in \Sigma$. Three binary operator precedence (OP) relations are defined: \label{PrecRelat}
\begin{eqnarray*}
 \nonumber \text{equal in precedence: }  & a\doteq b  \iff&   \exists A\to\alpha aBb\beta, B\in N\cup\{\varepsilon\} \\
   \text{takes precedence: } & a\gtrdot b  \iff  &\exists A\to\alpha Db\beta, D\in N \text{ and } a\in \mathcal{R}_G(D) \\
 \nonumber    \text{yields precedence: }& a\lessdot b  \iff & \exists A\to\alpha aD\beta, D\in N \text{ and } b\in \mathcal{L}_G(D)
\end{eqnarray*}
For an OG $G$, the \textit{operator precedence matrix} (OPM) $M=OPM(G)$ is a $|\Sigma| \times |\Sigma|$ array that with each ordered pair $(a,b)$ associates the set $M_{ab}$ of OP relations holding between $a$ and $b$. 

\begin{definition}\label{defFloydGr}
$G$ is an \emph{operator precedence} or \emph{Floyd grammar} (FG) if, and only if, $M=OPM(G)$ is a \textit{conflict-free} matrix, i.e., $\forall a,b$, $|M_{ab}|\leq 1$. 
\end{definition}

\begin{example}
\label{ex:expr}
Arithmetic expressions with prioritized operators, a classical construct, are
presented in a simple variant without parentheses.
Figure~\ref{fig:exp} presents the productions of the grammar (left) and the derivation tree of expression 
$n + n \times n$ (center). We see that $\times \ \dot= \ n$ because they
appear in the right-hand side of the same production. Analogously, $+ \lessdot
n$ since
$+$ is sibling of a node with label $T$ and $n \in  \mathcal{L}_G(T)$.
The complete OPM is shown in Figure~\ref{fig:exp} (right).
\end{example}

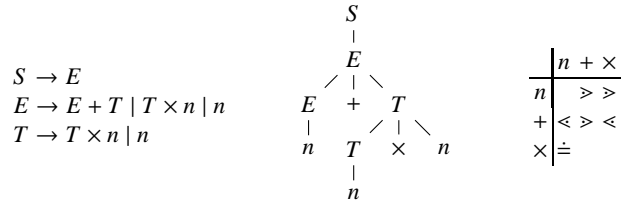
\begin{figure}
\begin{center}
\begin{tabular}{m{0.3\textwidth}m{0.25\textwidth}m{0.2\textwidth}}
\quad 
$\begin{array}{l}
S\to E\\
E\to E + T \mid T \times n \mid n\\ 
T\to T \times n \mid n 
\end{array}$
&
\quad
\begin{tikzpicture}[scale=0.4]
\node {$S$}
child {
        node {$E$}
	child {
		node {$E$}
		child{ 
			node{$n$}
		}
	}
	child {
		node {$+$}
	}
	child {
		node {$T$}
		child {
			node {$T$}
			child {
				node {$n$}
		                edge from parent [normal]
			}
		}
		child{
			node {$\times$}
		}
		child{
			node {$n$}
		}
	}
};
\end{tikzpicture}
&
\quad 
$
\begin{array}{c|ccc}
    & n & + & \times  \\
\hline
  n &   & \gtrdot & \gtrdot     \\
  + &  \lessdot &\gtrdot  & \lessdot     \\
  \times& \dot=  &   &       \\
\end{array}
$
\end{tabular}
\end{center}
\vspace{-.5cm}\caption{The Floyd grammar for arithmetic expressions without parentheses.}
\label{fig:exp}
\end{figure}

\medskip

The equal in precedence relations of a FG alphabet  are connected with an
important parameter of the grammar, namely the length of the right hand sides of
the rules. Clearly, a rule $A \to A_1 a_1 \ldots A_t a_t A_{t+1}$, where each
$A_i$ is a possibly missing nonterminal, is associated with relations $a_1 \dot=
a_2\dot= \ldots \dot= a_t$. If the $\dot=$ relation is cyclic, there is no
finite bound on the length of the r.h.s of a production. Otherwise the length is
bounded by $2\cdot c+1$, where $c\geq 1$ is the length of the longest $\dot=$-chain. 
In this paper, for the sake of simplicity and brevity we assume that all precedence matrices are $\doteq$-cycle free. 
In the case of FGs this prevents the risk of r.h.s of unbounded
length~\cite{Crespi-ReghizziMM1978}, 
in the case of FAs we will see that it avoids a priori the risk of an unbounded sequence of push operations onto the stack matched by only one pop operation.
The hypothesis of $\doteq$-cycle freedom could be replaced by weaker ones,
such as a bound on r.h.s, as it happens with FGs, at the price of heavier notation, constructions, and proofs.  

%


\begin{definition}
\label{defNormalForms}
A FG is in \textit{Fischer normal form} \cite{Fischer69} if it is invertible, the axiom $S$ does not occur in the r.h.s. of any rule, no empty rule exists  except possibly $S\to \varepsilon$, the other rules having  $S$ as l.h.s are renaming, and no other renaming rules exist.

\end{definition}
%

OPMs play a fundamental role in deterministic parsing of FGs. Thus in the view
of defining automata to parse FLs we pair them with the alphabet somewhat
mimicking VPL's approach where the terminal alphabet is partitioned into
calls, returns, and internals \cite{jacm/AlurM09}.
To this goal, we use a special symbol \# not in $\Sigma$ to 
mark the beginning and the end of any string. This is consistent with the
typical operator parsing technique that requires the lookback and lookahead of
one character to determine the precedence relation \cite{GruneJacobs:08}.
The precedence relation in the OPM are extended to include \# in the
normal way.

\begin{definition}
An \emph{operator precedence alphabet} is a pair $(\Sigma, M)$ where $\Sigma$ is an alphabet and 
$M$ is a conflict-free \textit{operator precedence matrix},
i.e. a $|\Sigma \cup \{ \# \}|^2$ array that with each ordered pair $(a,b)$ associates at most 
one of the operator precedence relations: $\doteq$, $\lessdot$ or $\gtrdot$. 
\end{definition}
\noindent For $u,v \in\Sigma^*$ we write $u \lessdot v$ if $u = xa$ and $v = by$ with $a \lessdot b$. Similarly for the other precedence relations.

\bigskip

\section{Floyd automata}
\label{sec:aut}

\begin{definition}
A nondeterministic \emph{precedence automaton} (or Floyd automaton) is given by a tuple:
$\mathcal A = \langle \Sigma, M, Q, I, F, \delta \rangle $ where:
\begin{itemize}
\item $(\Sigma, M)$ is a precedence alphabet,
\item $Q$ is a set of states (disjoint from $\Sigma$),
\item $I \subseteq Q$ is a set of initial states,
\item $F \subseteq Q$ is a set of final states,
\item $\delta : Q \times ( \Sigma \cup Q) \rightarrow 2^Q$ is the transition function.
\end{itemize}
\end{definition}

\noindent The transition function can be seen as the union of two disjoint functions:
\[
\delta_{\text{push}}: Q \times \Sigma \rightarrow 2^Q
\qquad 
\delta_{\text{flush}}: Q \times Q \rightarrow 2^Q
\]
A nondeterministic precedence automaton can be represented by a graph with $Q$ as the set of vertices and
$\Sigma \cup Q$ as the set of edge labellings: 
there is an edge from state $q$ to state $p$ labelled by $a \in \Sigma$ if and only if $p \in \delta_{push}(q,a)$ and 
there is an edge from state $q$ to state $p$ labelled by $r \in Q$ if and only if $p \in \delta_{flush}(q,r)$.
To distinguish flush transitions from push transitions we denote the former ones by a double arrow.

To define the semantics of the automaton, we introduce some notations.
We use 
letters $p, q, p_i, q_i, \dots $ for states in $Q$ and 
we set $\mrk{\Sigma} = \{\mrk a \mid a \in \Sigma \}$; symbols in $\Sigma'$ are
called {\em marked} symbols.
%
Let $\Gamma = (\Sigma \cup \mrk \Sigma \cup \{\#\}) \times Q$; we denote symbols in $\Gamma$ as $\stack 0aq$,
$\stack 1aq$, or $\stack 0\#q$, respectively.
We set $\symb {\stack 0aq} = \symb {\stack 1aq} = a$, $\symb {\stack 0\#q}=\#$, and
$\state {\stack 0aq} = \state {\stack 1aq} = \state {\stack 0\#q} = q$.
%
%
Given a string $\beta = B_1 B_2 \dots B_n$ with $B_i \in \Gamma$, we set 
$\state \beta = \state{B_n}$.

We call a \emph{configuration} any pair $C = \config {\beta} {w}$,
where $\beta = B_1 B_2 \dots B_n\in \Gamma^*$, $\symb{B_1} = \#$, and $w = a_1 a_2 \dots a_m \in \Sigma^*\#$.
A configuration represents both the contents $\beta$ of the stack and the part of input $w$ still to process.
We also set
$\tp C = \symb{B_n}$ and $\inpt C = a_1$.

A computation of the automaton is a finite sequence of moves $C \vdash C_1$; there are three kinds of moves, depending on the precedence relation between $\tp C$ and $\inpt C$:

\smallskip

\noindent {\bf push move:} 

if $\tp C \doteq \inpt C$ then 
$
\config{\beta}{aw} \vdash \config{\beta \stack 0aq }{w}, \
\forall q \in \delta_{push}(\state\beta,a)$;
\smallskip

\noindent {\bf mark move:} 

if $\tp C \lessdot \inpt C$ then
$
\config{\beta}{aw} \vdash \config{\beta \stack 1aq }{w}, \
\forall q \in \delta_{push}(\state\beta,a)$;

\smallskip

\noindent {\bf flush move:} 

if $\tp C \gtrdot \inpt C$ then
let $\beta = B_1 B_2 \dots B_n$ with ${B_j} = \stack 0{x_j}{q_j}$, $x_j \in \Sigma \cup \mrk\Sigma$ 
and let $i$ the greatest index such that
$B_i$ belongs to $\mrk\Sigma \times Q$. Then
\[
\config{\beta}{aw} \vdash \config{B_1 B_2 \dots B_{i-2}\stack
0{x_{i-1}} }{aw}, \
\forall q \in \delta_{flush}(q_n,q_{i-1}).
\]

\smallskip

Push and mark moves both push the input symbol on the top of the stack, together with the new state computed by $\delta_{push}$; such moves differ only in the marking of the symbol on top of the stack.
The flush move is more complex: the symbols on the top of the stack are removed until the first marked symbol (\emph{included}),
and the state of the next symbol below them in the stack is updated by $\delta_{flush}$ according to the pair of states that delimit the portion of the stack to be removed; notice that in this move the input symbol is not relevant and it remains available for the following move.

Finally, we say that a configuration $\stack 0{\#}{q_I}$ is {\em starting} if $q_I \in I$ and
a  configuration $\stack 0{\#}{q_F}$ is {\em accepting} if $q_F \in F$.
The language accepted by the automaton is defined as:
\[
L(\mathcal A) = \left\{ x \mid  \config {\stack 0\#{q_I}} {x\#}  \comp * 
\config {\stack 0\#{q_F}} \# , q_I \in I, q_F \in F \right\}.
\]

\begin{example}
The automaton depicted in Figure \ref{ex:primo} accepts the Dyck
language $L_D$ of balanced strings of parentheses, with two parentheses pairs $a, \underline{a}$, and $b, \underline{b}$.
The same figure also shows an accepting computation on input
$aba\underline{a}\underline{b}\underline{a}a\underline{a}$.
\end{example}


\begin{figure}
\begin{center}

\begin{tabular}{l}
\begin{tikzpicture}[every edge/.style={draw,solid}, node distance=4cm, auto, 
                    every state/.style={draw=black!100,scale=0.5}, >=stealth]

\node[initial by arrow, initial text=,state,accepting] (S) {{\huge $q_0$}};
\node[state] (E) [right of=S, xshift=0cm] {{\huge $q_1$}};

\path[->]
(S) edge [bend left]  node {$a, b$} (E)
(E) edge [loop below] node {$a, \underline{a}, b, \underline{b}$} (E)
(E) edge [loop right, double] node {$ q_1$} (E)
(E) edge [double, below]  node {$q_0$} (S) ;
\end{tikzpicture}
\\
$
\begin{array}{c|ccccc}
      & a & \underline{a} & b & \underline{b} & \# \\
\hline
a & \lessdot & \dot= & \lessdot &  &  \\
\underline{a}  & \lessdot & \gtrdot & \lessdot &\gtrdot & \gtrdot \\
b & \lessdot & & \lessdot & \dot=  & \\
\underline{b}  & \lessdot & \gtrdot & \lessdot & \gtrdot & \gtrdot \\
\#       & \lessdot & & \lessdot & & \dot=  \\
\end{array}
$
\end{tabular}
\qquad\quad
$
\begin{array}{llcr}
 & \langle \stack 0\# {q_0}     & , &  
  aba\underline{a}\underline{b}\underline{a}a\underline{a}\# \rangle \\ 
\text{mark} & \langle \stack 0\# {q_0} \stack 1 a {q_1}     & , &      ba\underline{a}\underline{b}\underline{a}a\underline{a}\# \rangle \\
\text{mark} & \langle \stack 0\# {q_0} \stack 1 a {q_1} \stack 1 b {q_1}     & , &      a\underline{a}\underline{b}\underline{a}a\underline{a}\# \rangle \\
\text{mark} & \langle \stack 0\# {q_0} \stack 1 a {q_1} \stack 1 b {q_1} \stack 1 a {q_1}     & , &      \underline{a}\underline{b}\underline{a}a\underline{a}\# \rangle \\
\text{push} & \langle \stack 0\# {q_0} \stack 1 a {q_1} \stack 1 b {q_1} \stack 1 a {q_1} 
  \stack 0 {\underline{a}} {q_1}     & , &      \underline{b}\underline{a}a\underline{a}\# \rangle \\
\text{flush} & \langle \stack 0\# {q_0} \stack 1 a {q_1} \stack 1 b {q_1}     & , &      \underline{b}\underline{a}a\underline{a}\# \rangle \\
\text{push} & \langle \stack 0\# {q_0} \stack 1 a {q_1} \stack 1 b {q_1} \stack 0 {\underline{b}} {q_1}     & , &      \underline{a}a\underline{a}\# \rangle \\
\text{flush} & \langle \stack 0\# {q_0} \stack 1 a {q_1}     & , &      \underline{a}a\underline{a}\# \rangle \\
\text{push} & \langle \stack 0\# {q_0} \stack 1 a {q_1} \stack 0 {\underline{a}} {q_1}     & , &      a\underline{a}\# \rangle \\
\text{mark} & \langle \stack 0\# {q_0} \stack 1 a {q_1} \stack 0 {\underline{a}} {q_1} \stack 1 a {q_1}     & , &      \underline{a}\# \rangle \\
\text{push} & \langle \stack 0\# {q_0} \stack 1 a {q_1} \stack 0 {\underline{a}} {q_1} \stack 1 a {q_1} \stack 0 {\underline{a}} {q_1}     & , &      \# \rangle \\
\text{flush} & \langle \stack 0\# {q_0} \stack 1 a {q_1} \stack 0 {\underline{a}} {q_1}     & , &      \# \rangle \\
\text{flush} & \langle \stack 0\# {q_0}     & , &  \# \rangle \\
\end{array}
$

\caption{Automaton, precedence matrix, and example of computation for language $L_D$.}\label{ex:primo}
\end{center}
\end{figure}
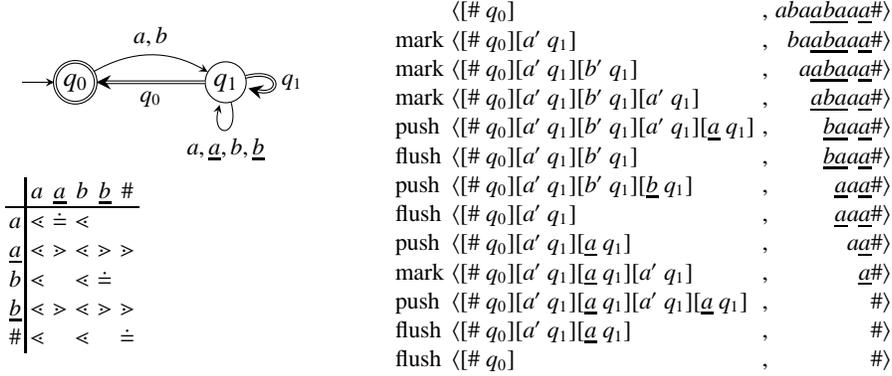



A Floyd automaton is called \emph{deterministic} when $\delta_{\text{push}}(q,a)$ and $\delta_{\text{flush}}(q,p)$
have at most one element, for every $q,p \in Q$ and $a \in \Sigma$, and $I$ is a singleton.
Here we prove that deterministic Floyd automata are equivalent to nondeterministic ones,
with a power-set construction similar to the one used for classical finite state automata.

\begin{theorem}
\label{th:nondet}
Deterministic Floyd automata are equivalent to nondeterministic ones.
\end{theorem}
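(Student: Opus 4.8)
The plan is to carry out a subset (power-set) construction, as for finite automata, but adapted to the stack discipline. The enabling observation is that, since $M$ is conflict-free, the move taken at each step --- push, mark, or flush, and in the flush case the exact portion of the stack removed --- depends only on the precedence relation between $\tp{C}$ and $\inpt{C}$, i.e.\ on the stack \emph{symbols} and the input, never on the states. Consequently, for a fixed input word all nondeterministic computations evolve through configurations of identical stack shape (same symbols in the same positions, same markings), differing only in the states glued to the stack symbols. A single deterministic computation can then simulate all of them at once by storing, at each stack position, a finite description of the set of states the corresponding nondeterministic position may carry.

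The naive choice --- store at each position the plain \emph{set} of possible states --- fails, and this is precisely the point that must be handled with care. At a flush of a block occupying positions $i,\dots,n$, the new state written below the block is drawn from $\delta_{flush}(q_n,q_{i-1})$, where $q_n$ is the top state and $q_{i-1}$ the state just beneath the block; in a single run these two are \emph{correlated}, because $q_{i-1}$ was frozen when the block was opened and $q_n$ was produced by the sub-computation started by a mark move out of $q_{i-1}$. Pairing every top state with every below-state, as the set construction would, lets the automaton combine states coming from different runs and thus over-approximates the language. I would therefore store, at each position, a binary relation $R\subseteq Q\times Q$, in the spirit of the summaries used to determinize visibly pushdown automata: $(e,q)\in R$ means that some run reaches this configuration with current state $q$ at the position and entry state $e$ at the bottom of the block to which it belongs. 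Take $\mathcal A_d=\langle\Sigma,M,2^{Q\times Q},\{R_0\},F_d,\delta_d\rangle$ with $R_0=\{(q_0,q_0)\mid q_0\in I\}$ and $F_d=\{R\mid \exists q_0\in I,\ q_f\in F,\ (q_0,q_f)\in R\}$. The transitions act on relations as follows: on a push, compose the top relation with the push relation while preserving the entry component; on a mark, reset the entry component to the current top states before composing, so that inside the block $R$ relates each block-entry state to the current one; on a flush, combine the relation $R_n$ of the closed block with the relation $R_{i-1}$ of the symbol below it by relational composition through $\delta_{flush}$, matching $R_n$'s entry component with $R_{i-1}$'s current component. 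This matching on the shared intermediate state is exactly what preserves the correlation that the plain-set construction destroys.

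Correctness would then follow from an invariant proved by induction on the number of moves: on every input prefix the unique deterministic run reaches a configuration whose stack has the same shape as that reached by every nondeterministic run, and whose relation at each position is \emph{exactly} the set of (block-entry, current) state pairs realized over all nondeterministic runs. The three move cases are checked directly against the definitions of the push, mark, and flush moves; shape-determinism keeps the deterministic and nondeterministic runs in lock-step, so each inductive step reduces to a set-theoretic identity between the computed relation and the union over runs. Applying the invariant to the final configuration $\langle\,[\#\ R]\,,\,\#\,\rangle$ and unfolding $F_d$ yields $x\in L(\mathcal A_d)\iff x\in L(\mathcal A)$, and determinism (singleton initial set, at most one value of $\delta_d$ per argument) is immediate.

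The main obstacle is the flush step, and inside it the need to keep the entry/current correlation intact: getting the relational composition to match on the correct intermediate state, while simultaneously updating the entry component of the symbol that survives the flush, is the delicate part and the place where the plain subset construction silently fails. A secondary technical nuisance is that mark and push moves are governed by the \emph{same} function $\delta_{push}$, so the determinized automaton cannot distinguish block-opening from block-continuation at the level of the state transition alone; the distinction is available only through the precedence relation between the top symbol and the input, and the construction must be phrased so that the entry component is reset at exactly the mark moves. Once these two points are pinned down, the equivalence follows.
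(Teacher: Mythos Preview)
Your high-level strategy---track pairs $(\text{entry state},\text{current state})$ and use relational composition at flush time---is exactly the paper's strategy, and your diagnosis of why the plain power-set construction fails is correct. However, the point you flag as a ``secondary technical nuisance'' is in fact a genuine gap that your construction, as written, does not close.

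The issue is this. In a Floyd automaton the function $\delta_{\text{push}}$ has signature $Q\times\Sigma\to 2^Q$; it is called identically for push moves and mark moves, and the distinction between the two is made by the \emph{semantics} (comparing the top-of-stack symbol with the input via $M$), not by the transition function. Your determinized automaton has state space $2^{Q\times Q}$ and its push function has signature $2^{Q\times Q}\times\Sigma\to 2^{Q\times Q}$. Given only $R$ and $a$, this function cannot tell whether the current move is a mark ($\lessdot$) or a push ($\doteq$), because the top-of-stack symbol is not an argument. Hence there is no way to ``reset the entry component at exactly the mark moves'' while ``preserving it at push moves'' using a single function $(\delta_d)_{\text{push}}$ of that signature. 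Your sentence ``the construction must be phrased so that the entry component is reset at exactly the mark moves'' names the requirement but does not meet it; with state space $2^{Q\times Q}$ it cannot be met.

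The paper's fix is to enlarge the state space to $(\Sigma\cup\{\#\})\times 2^{Q\times(Q\cup\{\bot\})}$, carrying the last pushed terminal as part of the state. This is redundant with the stack symbol, but it makes the symbol visible to $\tilde\delta_{\text{push}}$: given $\langle b,K\rangle$ and input $a$, the function can test $b\lessdot a$ versus $b\doteq a$ and update the second component of each pair accordingly (reset to the current state on $\lessdot$, keep the old second component on $\doteq$). Once you add this symbol component, your relational-composition treatment of flush and your proposed invariant go through essentially as in the paper; but without it the automaton you describe is not well-defined as a Floyd automaton.
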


\noindent
Given a nondeterministic automaton $\mathcal A = \langle \Sigma, M, Q, I, F, \delta \rangle $,
consider the deterministic automaton $\tilde{\mathcal A} = \langle \Sigma, M, \tilde Q, \tilde I, \tilde F, \tilde \delta \rangle $ where:

\begin{itemize}
	\item $\tilde Q = \hat{\Sigma} \times 2^{Q \times (Q \cup \{\bot\} )}$, 
	where $\hat{\Sigma} = (\Sigma \cup \left\{ \# \right\} )$, 
	$Q \cap \{  \bot  \} = \emptyset$, 
	and $\bot$ is a symbol that stands for the baseline of the computations (i.e. the pseudo-state before the initial states),

	\item $\tilde I = \langle \# ,  I \times \{  \bot \} \rangle $ is the initial state of $\mathcal {\tilde{A}}$,

	\item $\tilde F$ is the set of pairs $\pair{\#}{K}$ such that there exists $q \in F$ with $\stpair{q}{\bot} \in K.$
	
	\item $\tilde\delta : \tilde Q \times ( \Sigma \cup \tilde Q) \rightarrow \tilde Q$ is the transition function defined as follows.
The push transition ${\tilde\delta}_{\text{push}}: \tilde Q \times \Sigma \rightarrow \tilde Q$ is defined by
\[
{\tilde \delta}_{\text{push}}  (\pair{b}{K}, a) = 
\left\langle a, 
\bigcup_{ \stpair{q}{p} \in K} \left\{ 
\stpair{h}{t} \mid h \in \delta_{\text{push}}(q,a) \text{ and } t = \left[  
\begin{array}{ll}
q & \text{if } b \lessdot a\\
p & \text{if } b \doteq a
\end{array}
\right. \\
\right\}
\right\rangle
\]
The flush transition ${\tilde\delta}_{\text{flush}}: \tilde Q \times \tilde Q \rightarrow \tilde Q$ is defined as follows:
\[
{\tilde{\delta}}_{\text{flush}} (\pair{b}{K_1}, \pair{a}{K_2}) =
\left\langle a, 
\bigcup_{ \stpair rq \in K_1, \stpair qp \in K_2} \left\{ \stpair{h}{p} \mid h \in \delta_{\text{flush}} (r,q) \right\}
\right\rangle.
\]

\end{itemize}

\noindent
The proof of the equivalence between $\mathcal A$ and $\tilde{\mathcal A}$ is given in Appendix.

\section{Floyd automata vs Floyd grammars}
\label{sec:aut_gr}

The main result of this paper is the perfect match between FGs and FAs.

\subsection{From Floyd grammars to Floyd automata}
\label{sec:gr2aut}

\begin{theorem}
\label{GtoA}
Any $L$ generated by a Floyd grammar can be recognized by a Floyd automaton
\end{theorem}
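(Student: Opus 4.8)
The plan is to build, from a Floyd grammar $G$, a Floyd automaton $\mathcal A$ whose computations faithfully simulate the bottom-up (shift-reduce) parsing that the operator precedence relations impose on $G$. The key observation is that the push/mark/flush mechanism of a Floyd automaton already mirrors exactly the three precedence relations $\doteq$, $\lessdot$, $\gtrdot$ that drive operator precedence parsing: a $\lessdot$ marks the left end of a handle, a sequence of $\doteq$ keeps pushing inside the handle, and a $\gtrdot$ triggers the reduction (flush) of the handle just delimited. So the automaton and the grammar are, by design, parsing the same phrase structure; what remains is to make the states carry enough grammatical information to recognize \emph{which} reductions are legal and to certify acceptance.

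\medskip

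First I would assume, without loss of generality, that $G$ is in Fischer normal form (Definition~\ref{defNormalForms}), since every FG admits an equivalent one; this removes spurious empty and renaming rules and makes $G$ invertible, so that each right-hand side uniquely identifies a production. I would take $M = OPM(G)$ as the precedence matrix of $\mathcal A$, which is conflict-free by Definition~\ref{defFloydGr}. The central design choice is the \emph{state set}: a state must record, for the stretch of stack currently being assembled between a mark and the eventual flush, the set of grammar nonterminals that could derive the corresponding terminal pattern together with the relevant left/right terminal context. Concretely, I would let states encode partial right-hand sides — essentially "items" of the form $A \to \alpha \bullet \beta$ restricted to operator form — so that $\delta_{\text{push}}$ advances the dot across a terminal (a push when the relation is $\doteq$, inside a handle) and across the nonterminal summarizing a completed sub-reduction, while $\delta_{\text{flush}}$ fires when a complete right-hand side $A \to \alpha_1 a_1 \dots a_t \alpha_{t+1}$ has been recognized between the mark and the top of the stack, replacing it by the nonterminal $A$ and updating the state below via the $\delta_{\text{flush}}$ rule. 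The $\doteq$-cycle-free hypothesis guarantees that every right-hand side has bounded length, so finitely many items suffice and $Q$ is finite.

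\medskip

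Next I would establish the correctness of the simulation by a two-way invariant, proved by induction on the length of the computation / derivation. One direction: every leftmost-innermost reduction sequence of $G$ that reduces $x$ to the axiom $S$ corresponds to an accepting computation $\config{\stack 0\#{q_I}}{x\#} \comp{*} \config{\stack 0\#{q_F}}{\#}$ of $\mathcal A$, where each handle reduction in $G$ is matched by a flush move collapsing the marked segment of the stack. The converse direction shows that any accepting computation of $\mathcal A$, read backwards through its flush moves, reconstructs a valid derivation tree of $G$ for $x$; here the conflict-freedom of $M$ is what forces the precedence relation between consecutive stack-top and lookahead symbols to be uniquely determined, so the handle boundaries detected by the automaton coincide with the handle boundaries of the unique operator precedence parse. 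Final states $F$ would be exactly those states reachable by a flush that produces $S$ at the baseline.

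\medskip

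\textbf{The main obstacle} I expect is the bookkeeping in the flush move, rather than the high-level correspondence. Unlike LR parsing, the Floyd automaton does not reduce one production at the top of a single symbol stack driven by a deterministic goto table; instead a flush must, in one move, (i) locate the innermost marked symbol that delimits the handle, (ii) verify that the terminal sequence between that mark and the top, interleaved with the summary nonterminals of previously flushed sub-handles, really is the right-hand side of some production $A \to \gamma$, and (iii) install $A$ into the state immediately below. Making the state encoding rich enough that $\delta_{\text{flush}}(q_n, q_{i-1})$ can perform this check \emph{locally} — using only the top state $q_n$ (which must summarize the entire recognized handle) and the state $q_{i-1}$ just below the mark (which must carry the left context needed to continue) — is the delicate part. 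The $\doteq$-cycle-freedom assumption is essential precisely here, since it bounds how much handle information a single state must accumulate before the flush; without it, a state would need to summarize an unbounded $\doteq$-chain, and the finite-state encoding would break.
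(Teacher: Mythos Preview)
Your plan is sound and the correctness argument you sketch has the right shape, but the paper's construction differs from your item-based encoding in a way worth noting. Rather than LR-style dotted items, the paper takes states to be pairs $(X,Y)$ of \emph{extended nonterminals} (a nonterminal tagged with the index of one of its productions): $X$ is a nondeterministically guessed \emph{parent}, in the derivation tree, of the symbol just consumed, and $Y$ is the extended nonterminal of the most recently completed subtree hanging below that parent (or $\perp$ if none). Push transitions are defined by a case analysis on the local tree configuration (whether the current leaf is leftmost, has a sibling leaf to its left, or has an ancestor between it and that sibling); a flush transition $\delta_{\text{flush}}\bigl((X,\ell_{/X}),(Y,\tilde f)\bigr)\ni(Y,X)$ simultaneously checks that the first child of the completed handle matches what was recorded below the mark, and installs $X$ as the new ``last completed child'' of $Y$. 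The equivalence is then proved via a single invariant (Lemma~\ref{lemma:violi}) phrased directly in terms of derivation-tree depth and ancestry, not in terms of reduction sequences.

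Your item encoding should also work and is closer to textbook shift--reduce intuition, but it distributes the verification differently: an item $A\to\alpha\bullet B\beta$ stored below a mark must already anticipate which nonterminal the sub-handle will reduce to, whereas the paper's pair encoding defers that commitment and checks agreement only at the flush via the two state components. Two minor corrections: the paper explicitly \emph{drops} invertibility from Fischer normal form (nondeterminism already handles multiple productions sharing a r.h.s.), so your appeal to it is unnecessary; and the $\doteq$-cycle-freedom hypothesis is used to bound r.h.s.\ lengths in the grammar itself, not merely to keep your item set finite.
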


We provide a constructive proof of the theorem: given a Floyd grammar $G$ we build
an equivalent nondeterministic Floyd automaton $\mathcal A = \langle  \Sigma, M, Q, I, F, \delta \rangle$, 
whose precedence matrix $M$ is the same as the one associated with $G$. 
A successful computation of $\mathcal A$ will correspond to a derivation tree in $G$:
intuitively, a push transition tries to guess the parent of the symbol currently under the input head 
(i.e. it determines the l.h.s of a rule of $G$ whose r.h.s contains the current symbol);
a flush transition is performed whenever the r.h.s of a rule is completed, and determines the corresponding l.h.s., 
thus confirming some previous guesses.

In order to keep the construction as simple as possible, we avoid introducing any optimization.
Also, without loss of generality, we assume that the grammar $G = \langle \Sigma, N, P, S \rangle$ 
satisfies the following properties: the axiom $S$ does not occur in the r.h.s. of any rule, 
no empty rule exists except possibly $S\to \varepsilon$, the other rules having $S$ as l.h.s are renaming, and no other renaming rules exist
(in other words, we assume that the $G$ is in Fischer normal form except it is not necessarily invertible).

First of all, we introduce some notation. Enumerate the productions as follows: for any nonterminal $A \in N$,
let $P_1(A),$ $P_2(A),$ $\dots P_{n(A)}(A)$ be the productions having $A$ as l.h.s.
(i.e. $n(A)$ is the number of productions having $A$ as l.h.s.).
Then, consider the set of \emph{extended nonterminals} $EN = \{ A_i \mid A \in N, i = 1, 2, \dots n(A) \}$
and define $Q =  EN \times (EN \cup \{ \perp \}) $, where $\perp$ is a new symbol 
whose meaning is \emph{undefined}. To distinguish between nonterminals and extended nonterminals, 
we will use capital letters $A,B,C, \dots$ and $X,Y,Z, \dots$, respectively.

When considering derivation trees of $G$, we label internal nodes with extended nonterminals (where the subscript of the nonterminal corresponds to the rule applied in the node). 
Moreover, with a slight abuse of notation, we sometimes confuse nodes and their labels, using the above convention also for internal nodes and leaves.


To define the push transition function $\delta_{push} : Q \times \Sigma \to 2^Q$,
consider any derivation tree $\tau$ of $G$ with any leaf $a$ and let  $X$ be $a$'s  parent in $\tau$.
Figure~\ref{fig:tree_push} represents the various configurations that $\tau$ may exhibit.
\begin{itemize}
\item Case 0: if there is no leaf that precedes $a$ in the in-order visit of
$\tau$ and has depth not greater than $a$'s depth,
then let $Y$ be the topmost ancestor of $X$, i.e., $Y =S_i$ for some $i$;
this also means that $\# \lessdot a$;
\item Otherwise, let $b$ be the rightmost such leaf, and let $Y$ be $y$'s parent.
Notice that, $G$ being an operator grammar, $Y$ is the nearest common ancestor of $a$ and $b$.
%
%
Then there are two possibilities:
\begin{itemize}
\item Case 1: $X = Y$, i.e. $b \doteq a$;
\item Case 2: $X \neq Y$, and in this case $b$ has lower depth than $a$, so $b \lessdot a$.
\end{itemize}
\end{itemize}
In all cases, node $Z$ may be missing, or there may be other leaves between $b$ and $a$ (namely, $Z$'s descendants);
let $\hat Z = \perp$ if $Z$ is missing, $\hat Z = Z$ otherwise.
Then, for each such triple $(a,X,Y)$, define the \emph{$(a,X,Y)$-push transition}:
\[
\delta_{push}( (Y, \hat Z), a ) \ni 
	\left\{
	\begin{array}{ll}
		(X, X) & \text{if $a$ is the rightmost child of $X$,}\\
		(X, \perp) & \text{otherwise}.
	\end{array}
	\right.
\]
Hence, a push transition essentially determines the parent of the symbol under the input head 
(actually, a ``candidate'' parent, since the automaton is non-deterministic).



\begin{figure}[h]
\begin{center}

\begin{tabular}{ccc}
\begin{tikzpicture}[scale=0.5]
\node {$Y = S_i$}
child {
	node {$X$}
        edge from parent [path]
	child {
		node {$Z$}
		edge from parent [normal]
		child{
			node {$\dots$}
			edge from parent [normal]
		}
	}
	child {
		node {$a$}
		edge from parent [normal]
	}
	child {
		node {$\dots$}
		edge from parent [normal]
	}
}
child {node {$\dots$}};
\end{tikzpicture}

\qquad
&
\qquad

\begin{tikzpicture}[scale=0.5]
\node {}
child {
	node {$X = Y$} 
	child {
		node {$\dots$}
	}
	child {
		node {$b$}
	}
	child {
		node {$Z$}
		child{
			node {$\dots$}
		}
	}
	child {
		node {$a$}
	}
	child {
		node {$\dots$}
	}
};
\end{tikzpicture}

\qquad
&
\qquad

\begin{tikzpicture}[scale=0.5]
\node {}
child {
        node {$Y$}
	child {
		node {$\dots$}
	}
	child {
		node {$b$}
	}
	child {
		node {$W$}
		child {
			node {$X$}
		        edge from parent [path]
			child {
				node {$Z$}
		                edge from parent [normal]
				child{
					node {$\dots$}
			                edge from parent [normal]
				}
			}
			child {
				node {$a$}
			        edge from parent [normal]
			}
			child {
				node {$\dots$}
		                edge from parent [normal]
			}
		}
		child{
			node {$\dots$}
	                edge from parent [normal]
		}
	}
	child {
		node {$\dots$}
	}
};
\end{tikzpicture}
\\
Case 0 &Case 1 &Case 2\\
\end{tabular}

\end{center}
\caption{Derivation tree configurations for the push transition function (nodes labelled as $\dots$ could be missing).\label{fig:tree_push}}
\end{figure}
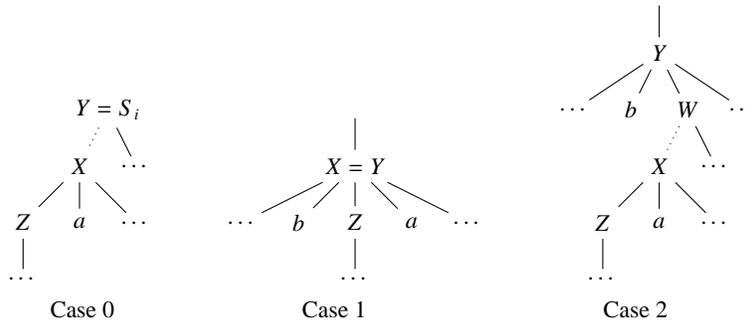


\noindent

A similar construction holds for the flush transition function $\delta_{flush}: Q \times Q \to 2^Q$.
For every derivation tree with internal node $X$, 
let $f$ and $\ell$ be the first and last child, respectively, of node $X$.
Notice that both $f$ and $\ell$ may be either internal nodes or leaves. 
Then there are two possibilities,
as depicted in Figure~\ref{fig:tree_flush}:
\begin{itemize}
\item Case 3: there is no leaf at the left of $X$, then let $Y$ be the topmost ancestor of $X$, i.e., $Y=S_i$ for some $i$;
\item Case 4: otherwise, let $b$ be the rightmost leaf at the left of $X$ and let $Y$ be $b$'s parent (again, notice that
$Y$ is the nearest common ancestor of $X$ and $b$, $G$ being an operator grammar).
\end{itemize}
Also, let $\ell_{/\!X}$ be $\ell$ if $\ell$ is an internal node, $X$ otherwise;
let $\tilde f$ be $f$ if $f$ is an internal node, $\perp$ otherwise.
Then, for each such pair $(X,Y)$ define the \emph{$(X,Y)$-flush transition}:
\[
\delta_{flush}( (X, \ell_{/\!X}), (Y, \tilde f)  ) \ni (Y, X).
\]
Hence, the state computed by a flush transition contains two pieces of information:
the first component determines the nearest ancestor of both $X$ and $b$ (or the axioms if $b$ does not exist),
while the second component determines the nonterminal corresponding to the r.h.s. just completed.


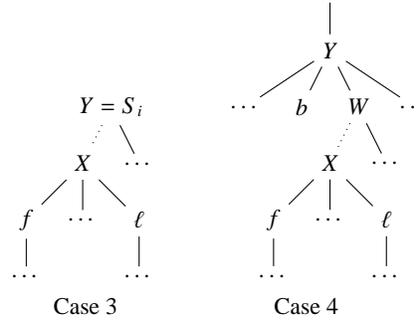
\begin{figure}[h]
\begin{center}
\tikzset{
	path/.style={dotted},
	every edge/.style={solid},
	normal/.style={solid},
}

\begin{tabular}{ccc}
\begin{tikzpicture}[scale=0.5]
\node {$Y=S_i$}
child {
	node {$X$}
        edge from parent [path]
	child {
		node {$f$}
		edge from parent [normal]
		child{
			node {$\dots$}
			edge from parent [normal]
		}
	}
	child {
		node {$\dots$}
		edge from parent [normal]
	}
	child {
		node {$\ell$}
		edge from parent [normal]
		child{
			node {$\dots$}
			edge from parent [normal]
		}
	}
}
child {node {$\dots$}};
\end{tikzpicture}

\qquad
&
\qquad
\begin{tikzpicture}[scale=0.5]
\node {}
child {
        node {$Y$}
	child {
		node {$\dots$}
	}
	child {
		node {$b$}
	}
	child {
		node {$W$}
child {
	node {$X$}
        edge from parent [path]
	child {
		node {$f$}
		edge from parent [normal]
		child{
			node {$\dots$}
			edge from parent [normal]
		}
	}
	child {
		node {$\dots$}
		edge from parent [normal]
	}
	child {
		node {$\ell$}
		edge from parent [normal]
		child{
			node {$\dots$}
			edge from parent [normal]
		}
	}
}
		child{
			node {$\dots$}
	                edge from parent [normal]
		}
	}
	child {
		node {$\dots$}
	}
};
\end{tikzpicture}
\\
Case 3 &Case 4\\
\end{tabular}

\end{center}
\caption{Derivation tree configurations for the flush transition function (all nodes marked as $\dots$ could be missing).\label{fig:tree_flush}}
\end{figure}


Finally, initial and final states are defined as follows.
\[
I = \{(S_i, \perp) \mid 1 \le i \le n(S)\}, 
\qquad 
F = \{(S_i, A_j) \mid S \to
A \in P, 1 \le i \le n(S), 1 \le j \le n(A)\}.
\]

Notice that the above construction is effective. 
All triples $(a,X,Y)$ involved by some push transition can be found starting from any rule $X \to \alpha$ with $\alpha$ containing $a$:
if $a$ is not the leftmost terminal of $\alpha$, then take the triple $(a, X, X)$, else apply backwards any rule
with r.h.s starting with $X$ and extend this process until all productions have been examined. Similarly for the flush transitions.

\begin{example}
Let $G$ be the grammar introduced in Example~\ref{ex:expr}.
Following the above construction, number the rules of the grammar in the order they appear in the definition of $G$
(for instance, $P_2(E)$ is $E \to T \times a$).
The transitions defined by the derivation tree of string $a \times a +a$, depicted in Figure~\ref{fig:string} (left),
are the following:
\[
\begin{array}{cc}
  \begin{array}{l}
  \delta_{push}( (S_1, \perp), a ) \ni (T_2, T_2) \\
  \delta_{push}( (S_1, T_2), \times ) \ni (E_2, \perp)\\
  \delta_{push}( (S_1, E_2), + ) \ni (E_1, \perp)\\
  \delta_{push}( (E_2, \perp), a ) \ni (E_2, E_2)\\
  \delta_{push}( (E_1, \perp), a ) \ni (T_2, T_2) \\
  \end{array}
& \qquad
  \begin{array}{l}
  \delta_{flush}( (T_2, T_2), (E_1, \perp)  ) \ni (E_1, T_2) \\
  \delta_{flush}( (T_2, T_2), (S_1, T_2)  ) \ni (S_1, T_2)\\
  \delta_{flush}( (E_2, E_2), (S_1, T_2)  ) \ni (S_1, E_2)\\
  \delta_{flush}( (E_1, T_2), (S_1, E_2)  ) \ni (S_1, E_1)\\
  \end{array}
\end{array}
\]
The first one is the $(a, T_2, S_1)$-push transition obtained by starting from  the left-most leaf (Case 0).
Case 0 occurs also for the second and third push transitions, obtained considering the leaves labeled by $\times$ and $+$, respectively.
The other push transitions represent instances of Cases 1 and 2, in this order.
As far as flush transitions are concerned, 
Case 4 occurs only in the first stated transition, with $X=T_2$, $b=+$ and $Y=E_1$,
whereas all other productions represent instances of Case 3.
Hence, on input $a \times a +a$, the automaton $\mathcal A$ obtained from $G$ may execute 
the computation represented in Figure~\ref{fig:string} (right).
\end{example}


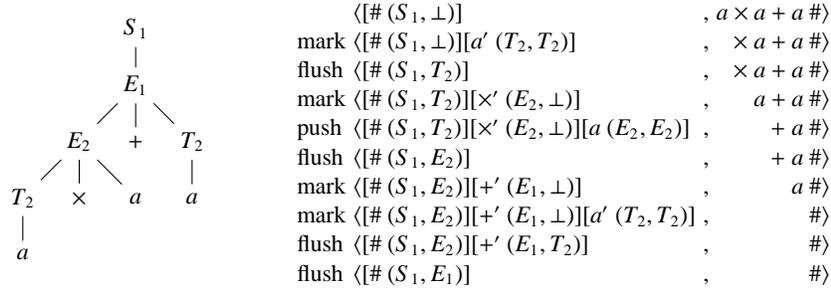
\begin{figure}[h]
\begin{center}
\begin{tabular}{m{.2\textwidth}m{.1\textwidth}m{.65\textwidth}}
\begin{tikzpicture}[scale=0.5]
\node {$S_1$}
child {
        node {$E_1$}
	child {
		node {$E_2$}
		child {
			node {$T_2$}
			child{
				node {$a$}
			}
		}
		child {
			node {$\times$}
		}
		child {
			node {$a$}
		}
	}
	child {
		node {$+$}
	}
	child {
		node {$T_2$}
		child {
			node {$a$}
		}
	}
};
\end{tikzpicture}
&
&
$
\begin{array}{llcr}
 & \langle \stack 0\# {(S_1, \perp)}     & , &    a \times a + a\ \# \rangle \\ 
\text{mark} & \langle \stack 0\# {(S_1, \perp)} \stack 1 a {(T_2, T_2)}     & , &      \times \ a + a\ \# \rangle \\
\text{flush} & \langle \stack 0\# {(S_1, T_2)}      & , &      \times \ a + a\ \# \rangle \\
\text{mark} & \langle \stack 0\# {(S_1, T_2)} \stack 1 \times {(E_2, \perp)}      & , &      a + a\ \# \rangle \\
\text{push} & \langle \stack 0\# {(S_1, T_2)} \stack 1 \times {(E_2, \perp)} \stack 0 a {(E_2,E_2)}     & , &      + \ a\ \# \rangle \\
\text{flush} & \langle \stack 0\# {(S_1,E_2)}    & , &     +\ a\ \# \rangle \\
\text{mark} & \langle \stack 0\# {(S_1,E_2)} \stack 1 + {(E_1,\perp)}    & , &       a\ \# \rangle \\
\text{mark} & \langle \stack 0\# {(S_1,E_2)} \stack 1 + {(E_1,\perp)} \stack 1 a {(T_2,T_2)}	& , &      \# \rangle \\
\text{flush} & \langle \stack 0\# {(S_1,E_2)} \stack 1 + {(E_1,T_2)}    & , &      \# \rangle \\
\text{flush} & \langle \stack 0\# {(S_1,E_1)}  & , &    \# \rangle \\
\end{array}
$
\end{tabular}
\end{center}
\caption{Derivation tree (left) and computation (right) for the string $a \times a + a$.\label{fig:string}}
\end{figure}


The equivalence between $G$ and the automaton described above is based on the following lemma,
whose proof is omitted because of space reasons. 
As usual we set $\Gamma = (\Sigma \cup \mrk \Sigma) \times Q = (\Sigma \cup \mrk \Sigma) \times (EN \times (EN \cup \{\perp\}))$
and we denote an element in $\Gamma$ as $\stack 0{a}{(X,Y)}$.
To avoid an excessively cumbersome notation, when describing the transitions
between configurations, we omit the extreme parts (i.e. the lower part of the
stack and a suffix of the input string) which are not affected by the
computation.

We define the \emph{depth of a computation} $C_1 \comp * C_2$ as the maximum number of marked symbols 
in one of the traversed configurations, minus the number of marked symbol on the stack in configuration $C_1$;
we define the \emph{depth of a derivation} $W \der * \alpha$ as the depth of the corresponding derivation tree.
When useful, we make the depth $h$ of a computation or a derivation explicit as in
$C_1 \comph h  C_2$ and $X \derh h \alpha$.
\begin{lemma}
\label{lemma:violi}
Let $Y,W$ be extended nonterminals of $G$, $v \in \Sigma^*$,
$a \lessdot v \gtrdot b$, and $\bar a \in \{a, \mrk a\}$. Then for all $h \geq 1$:
\[
\label{main}
	\config {\stack 0{\bar a}{(Y,\perp)}}{vb}
	\comph h	
	\config {\stack 0{\bar a}{(Y,W)}}{b}
	\quad
	\text{ iff }
	\quad
	\exists \alpha, \beta \text{ such that } Y \to  \alpha a W
\beta, \ W \derh h  v \text { in } G.
\]
\end{lemma}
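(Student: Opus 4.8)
The plan is to prove the equivalence by induction on the depth $h$, establishing both directions together (the base case $h=1$, where $v$ is consumed without ever nesting a second mark, is checked directly); the engine of the argument is the match between the operator-form factorization of the production applied at the root of the tree for $W$ and the precedence-driven factorization of $v$. I would write the root production in operator form as $W \to D_0 c_1 D_1 \cdots c_t D_t$, with $c_i \in \Sigma$ and each $D_i$ a possibly absent nonterminal; since $G$ is in Fischer normal form there are no renaming rules, so $t \ge 1$. Accordingly $v$ factors as $v = y_0 c_1 y_1 \cdots c_t y_t$, where $y_i$ is the yield of $D_i$, and the precedence relations dictate its shape: $c_1 \doteq \cdots \doteq c_t$, $c_i \lessdot y_i$ (since $c_i \lessdot \mathcal{L}_G(D_i)$) and $y_i \gtrdot c_{i+1}$ (since $\mathcal{R}_G(D_i) \gtrdot c_{i+1}$); moreover $a \lessdot c_1$ follows from $Y \to \alpha a W \beta$ together with $c_1 \in \mathcal{L}_G(W)$, and $c_t \gtrdot b$ follows from $c_t \in \mathcal{R}_G(W)$ and $v \gtrdot b$.

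Going from the derivation to the computation, I would realize this factorization move by move. The terminal $c_1$ is pushed by a \emph{mark} move (because $a \lessdot c_1$) and $c_2, \dots, c_t$ by ordinary \emph{push} moves (because $c_i \doteq c_{i+1}$), so that $c_1 \cdots c_t$ forms a single marked group resting on $\bar a$. Each interior subtree $D_i$ with $i \ge 1$ is preceded in the rule by $c_i$, so the production $W \to \cdots c_i D_i \cdots$ supplies precisely the adjacency needed to invoke the induction hypothesis with base $c_i$ in state $(W, \perp)$, lookahead $c_{i+1}$ (or $b$ for $i=t$), and depth $h_i \le h-1$; this consumes $y_i$ and overwrites the top state with $(W, D_i)$. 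After the whole right-hand side is scanned the stack top carries $c_t$, and since $c_t \gtrdot b$ the concluding flush erases the marked group and, by the defining rule of $\delta_{flush}$, rewrites the state of $\bar a$ into $(Y, W)$. Checking that the states emitted by $\delta_{push}$ and $\delta_{flush}$ agree with the construction is the case analysis behind Figures~\ref{fig:tree_push} and~\ref{fig:tree_flush}: the nearest common ancestor of an interior lookback leaf and the current terminal is $W$, whereas for the outermost terminals it is $Y$.

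The delicate point --- and the one I expect to be the real obstacle --- is the leftmost component $D_0$, which is \emph{not} preceded by a terminal inside $W$'s rule and hence does not fit the adjacency premise of the induction hypothesis; worse, along a left-recursive spine the base symbol $\bar a$ is rewritten repeatedly, running through $(Y,\perp), (Y, D_0), \dots$ before the last flush installs $(Y, W)$. The invariant that saves the argument is that the \emph{first} component of $\bar a$ stays equal to $Y$: every terminal descending the left spine has $a$ itself as its lookback leaf, and the nearest common ancestor of $a$ with any such terminal is $Y$, so each flush onto $\bar a$ preserves the first component and only updates the second. I would therefore either strengthen the induction hypothesis, replacing the adjacency $Y \to \alpha a W \beta$ by the weaker premise that $a \lessdot \mathcal{L}_G(W)$ with $Y$ the common ancestor (the stated lemma being the instance needed by the main theorem), or add a nested induction on the length of the left spine reducing $D_0$ to shorter spines; either way the leftmost reductions are routed through the same first-component-preserving flush rule.

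Going from the computation to the derivation, I would read the moves in reverse, using conflict-freeness to pin down their structure: the first move is forced to be a mark (as $a \lessdot v$), the last is a flush rewriting $\bar a$ to $(Y, W)$, and the topmost symbol flushed there is the marked $c_1$; the unmarked symbols of its group are exactly $c_2, \dots, c_t$, so the construction lets me recover a single production $W \to D_0 c_1 \cdots c_t D_t$. Each maximal sub-computation taking place strictly above some $c_i$ is self-contained --- it begins and ends with no marked symbol above $c_i$ --- has depth at most $h-1$, and by the induction hypothesis corresponds to a derivation $D_i \derh{h_i} y_i$; splicing these gives $W \derh{h} v$, with the depths matching because the outer bracket adds one level of marking above the deepest nested sub-computation. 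The reconstruction of the leftmost/left-recursive part is handled exactly as above, by the strengthened hypothesis or the nested induction.
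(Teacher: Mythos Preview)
Your proposal is correct in its broad strokes and you identify the key obstacle precisely: the leftmost nonterminal $D_0$ of the root production has no terminal to its left inside $W$'s rule, so the inductive hypothesis in its stated form does not apply to the prefix $y_0$. Your suggested fixes (strengthen the adjacency premise, or nest an induction along the left spine) are both workable, though the first is stated a bit loosely: what one actually needs is that $W$ lies on the left spine below the child of $Y$ immediately following $a$, i.e.\ $Y\to\alpha aW'\beta$ with $W'\der * W\gamma$.

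The paper, however, takes a different decomposition of $v$. Rather than factoring along the \emph{root} production $W\to D_0c_1D_1\cdots c_tD_t$, it factors along the \emph{left spine}: $v = cx\,d_1z_1\,d_2z_2\cdots d_nz_n$, where $c$ is the very first terminal of $v$ (hence a child of the bottommost node $X$ on the left spine), and each $d_i$ is the first terminal of the production one step higher on that spine. It then replaces the lemma by two auxiliary equivalences: property~(i), stating that a computation from $(Y,\perp)$ on $cx$ reaches $(Y,X)$ iff $Y\to\alpha aW\beta$, $W\der * X\gamma$, and $X\to c\epsilon$ with $\epsilon\der{} x$; and property~(ii), stating that a computation from $(Y,X)$ on $dz$ reaches $(Y,Z)$ iff in addition $W\der * Z\mu$ and $Z\to Xd\lambda$ with $\lambda\der{} z$. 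The lemma follows by one application of~(i) followed by repeated applications of~(ii). This is exactly your ``nested induction on the length of the left spine,'' made into two standalone statements that are then proved jointly by induction on depth.

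The trade-off: your factorization is more symmetric (all $c_i$ treated uniformly, the interior $D_i$ dispatched directly by the inductive hypothesis) but requires a side argument for $D_0$; the paper's factorization is asymmetric but absorbs the left-spine climbing into the very shape of properties~(i) and~(ii), so no separate case is needed.
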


\begin{proof}

The lemma is equivalent to the following two properties.
\begin{enumerate}[(i)]
\item
\label{down}
For every $Y,X$, $a \lessdot c \, \underline{\lessdot}\, x \gtrdot d$,
$\mathcal A$ admits the computation 
\[
\config {\stack 0{\bar a}{(Y,\perp)}}{c x d}
\comph k
\config {\stack 0{\bar a}{(Y,X)}}{d}
\]
if and only if there exist 
$W, \alpha, \beta, \gamma,\epsilon$ 
such that $Y \to  \alpha a W \beta,\  W \der * X \gamma,\  X \to c \epsilon,\
\epsilon \derh k x$.
\item%
\label{up}
For every $Y,X,Z$,
$a \lessdot d\, \underline{\lessdot}\, z \gtrdot e$,
$\mathcal A$ admits the computation 
\[
\config {\stack 0{\bar a}{(Y,X)}}{d z e}
\comph k
\config {\stack 0{\bar a}{(Y,Z)}}{d}
\]
if and only if there exist  
$W, \alpha, \beta,\mu, \lambda$ 
such that 
$Y \to  \alpha a W \beta,\ W \der *  Z \mu,\  Z \to X d \lambda, \lambda \derh k z$.
\end{enumerate}

\bigskip
\begin{center}
\begin{tabular}{ccc}

\begin{tikzpicture}[scale=0.5]
\node {$Y$}
child {node {$\alpha$}}
child {node {$a$}}
child {node {$W$}
	child {
		node {$v$}
	        edge from parent [path]
	}
}
child {node {$\beta$}}
;
\end{tikzpicture}
\quad & \quad 
\begin{tikzpicture}[scale=0.5]
\node {$Y$}
child {node {$\alpha$}}
child {node {$a$}}
child {node {$W$}
	child {node {$X$}
		edge from parent [path]
		child {
			node {$c$}
			edge from parent [normal]
		}
		child { node {$\epsilon$} 
			edge from parent [normal]
			child { 
				node {$x$}
				edge from parent [path]
			}
		}
	}
	child {node {$\gamma$}
		edge from parent [path]
	}
}
child {node {$\beta$}
}
;
\end{tikzpicture}
\quad & \quad 
\begin{tikzpicture}[scale=0.5]
\node {$Y$}
child {node {$\alpha$}}
child {node {$a$}}
child {node {$W$}
	child {node {$Z$}
		edge from parent [path]
		child {
			node {$X$}
			edge from parent [normal]
		}
		child {
			node{$d$}
			edge from parent [path]
		}
		child {
			node {$\lambda$}
			edge from parent [path]
			child {
				node{$z$}
				edge from parent [path]
			}
		}
	}
	child { node {$\mu$} 
		edge from parent [path]
	}
}
child {node {$\beta$}}
;
\end{tikzpicture}
\\
Statement of Lemma
&
Property~\eqref{down}
&
Property~\eqref{up}
\end{tabular}

\end{center}
\bigskip

\noindent
Notice that in~\eqref{down} $W$ and $X$ may coincide (i.e., $\gamma$ may be empty), 
and in~\eqref{up} $W$ and $Z$ may coincide (i.e., $\mu$ may be empty).
For $h = 1$, the lemma is given by property~\eqref{down} with $W=X$ and $k = 0$
(for $cx = v$, $d=b$);
for $h > 1$ we have $v= cx d_1 z_1 d_2 \dots d_n z_n$ for some $c \, \underline\lessdot \, x \gtrdot d_1$, 
$d_i \, \underline\lessdot z_i \gtrdot d_{i+1}$ (with $x$, $z_i$ possibly empty).
Then, applying first property~\eqref{down} and then, repeatedly, property~\eqref{up}, one gets the lemma.

We prove property~\eqref{down} reasoning by induction on $k$.
First let  $k = 0$; in this case $\epsilon = x$, i.e. $X \to cx$.
Hence, if $x = c_1 \dots c_n$,  during the computation defined in~\eqref{down}, $\mathcal A$ has to execute the following series of moves:
a marked $(c,X_0,Y)$-push transition (case 2 without $Z$), then a sequence of $(c_i,X_0,X_0)$-push transitions (case 1 without $Z$), and 
finally a $(X_0,Y)$-flush transition, for a suitable $X_0$:
\[
\config {\eta 
	\stack 0{\bar a}{(Y,\perp)} 
	\stack 0{\mrk c}{(X_0,\perp)} 
	\stack 0{c_1}{(X_0,\perp)} 
	\dots
	\stack 0{c_n}{(X_0,X_0)}}{d}
\vdash
\config {\eta 
	\stack 0{\bar a}{(Y,X_0)} }{d}.
\]
To end in the right configuration, we necessarily have $X_0 = X$. Moreover, 
by the definition of transitions in $\mathcal A$, $X$ must satisfy exactly the relations defined in~\eqref{down}.
Vice versa, if the grammar admits the derivation defined in~\eqref{down}, then obviously the automaton $\mathcal A$ admits the previous moves.

One can prove similarly property~\eqref{up} for $k=0$: in this case, both the marked $(d,Z,Y)$-push transition and the 
$(Z,Y)$-flush transition involve the extended nonterminal $X$
(i.e., the second component of the state on the top of the stack).

Now, assuming that properties~\eqref{down} and~\eqref{up} hold for depths lower than $k$, we prove them for $k$.
First consider~\eqref{down} and let
$x = u_0 c_1 u_1 c_2 \dots c_m u_m$ with $c \lessdot u_0$, $u_{i-1} \gtrdot c_i \lessdot u_i$
(with any $u_i$ possibly empty), and $c_i \doteq c_{i+1}$.
By the definition of the transition function, 
$\mathcal A$ admits the computation in~\eqref{down} if and only if there exist
$W, \alpha, \beta, \gamma,\epsilon$ as in~\eqref{down} and moreover
there exist $U_0, \cdots U_m$ such that 
$\epsilon = U_0 c_1 U_1 \dots c_m U_m$ and $U_i \derh {k_i} u_i$ with $k_i < k$
($U_i$ is missing iff $u_i$ is empty).
Hence one can apply the inductive hypothesis and get the result.

One can prove similarly property~\eqref{up} for $k$ greater than 0: again, in this case, both the marked $(d,Z,Y)$-push transition and the 
$(Z,Y)$-flush transition involve the extended nonterminal $X$.
\qed
\end{proof}

\noindent From the lemma the theorem easily follows by using a special case
$S \to A$ (with implicit $\#$ as $a$ and $b$).

\subsection{From Floyd automata to Floyd grammars}

Given a Floyd automaton $\mathcal A = \langle \Sigma,M, Q, I, F, \delta \rangle $, 
we show how to build an equivalent Floyd grammar $G$ having operator precedence matrix M.
In order to keep the construction as easy as possible, w.l.o.g we assume that $M$ is $\dot=$-acyclic.
Remind that, as discussed in Section~\ref{sec:prelim}, this hypothesis could be replaced by weaker ones.

We need some notation and definitions.
First of all, we shall represent 
a push transition with a simple arrow $\rightarrow$,
a flush transition with a double arrow $\Rightarrow$,
and a path defined by a sequence of transitions with a wavy arrow $\leadsto$.

We define \emph{chains} in $\mathcal A$ recursively.
A \emph{simple chain} is a word $a_0 a_1 a_2 \dots a_n a_{n+1}$,
written as
$
\chain {a_0} {a_1 a_2 \dots a_n} {a_{n+1}},
$
such that:
$a_0, a_{n+1} \in \Sigma \cup \{\#\}$,
$a_i \in \Sigma$ for every $i = 1,2, \dots n$, 
$M_{a_0,a_{n+1}} \neq \emptyset$,
and $a_0 \lessdot a_1 \doteq a_2 \dots a_{n-1} \doteq a_n \gtrdot a_{n+1}$.
%
%
A \emph{composed chain}
in $\mathcal A$ is a word 
$a_0 x_0 a_1 x_1 a_2  \dots a_n x_n a_{n+1}$, 
where
$\chain {a_0}{a_1 a_2 \dots a_n}{a_{n+1}}$ is a simple chain, and
$x_i \in \Sigma^*$ is the empty word 
or is such that $\chain {a_i} {x_i} {a_{i+1}}$ is a chain (simple or composed),
for every $i = 0,1, \dots, n-1$. 
Such a composed chain will be written as
$\chain {a_0} {x_0 a_1 x_1 a_2 \dots a_n x_n} {a_{n+1}}$.

We call a \emph{support} for the simple chain
$\chain {a_0} {a_1 a_2 \dots a_n} {a_{n+1}}$
any path in $\mathcal A$ of the form
\begin{equation}
\label{eq:simplechain}
q_0
\va{a_1}{q_1}
\va{}{}
\dots
\va{}q_{n-1}
\va{a_{n}}{q_n}
\flush{q_0} {q_{n+1}}
\end{equation}
Notice that the label of the last (and only) flush is exactly $q_0$, i.e. the first state of the path; this flush is executed because of relation $a_n
\gtrdot a_{n+1}$.
We call a \emph{support for the composed chain} 
$\chain {a_0} {x_0 a_1 x_1 a_2 \dots a_n x_n} {a_{n+1}}$
any path in $\mathcal A$ of the form
\begin{equation}
\label{eq:compchain}
q_0
\ourpath{x_0}{q'_0}
\va{a_1}{q_1}
\ourpath{x_1}{q'_1}
\va{a_2}{}
\dots
\va{a_n} {q_n}
\ourpath{x_n}{q'_n}
\flush{q'_0}{q_{n+1}}
\end{equation}
where, for every $i = 0, 1, \dots, n$: 

\begin{itemize}
\item if $x_i \neq \epsilon$, then $q_i \ourpath{x_i}{q'_i} $ 
is a support for the chain $\chain {a_i} {x_i} {a_{i+1}}$, i.e.,
it can be decomposed as $q_i\ourpath{x_i}{q''_i} \flush{q_i}{q'_i}$.

\item if $x_i = \epsilon$, then $q'_i = q_i$.
\end{itemize}
Notice that the label of the last flush is exactly $q'_0$.
	
We are now able to define a Floyd grammar $G = \langle \Sigma, N, S, P \rangle$.
Nonterminals are the 4-tuples $(a, q, p, b) \in \Sigma \times Q \times Q \times \Sigma$, written as $\nont apqb$,
plus the axiom $S$.
Rules are built as follows:
\begin{itemize}
\item 
for every support of type~(\ref{eq:simplechain})
of a simple chain, add the rule 
\[
\nont {a_0}{q_0}{q_{n+1}}{a_{n+1}}\longrightarrow a_1 a_2 \dots a_n\ ;
\]
if also $a_0 = a_{n+1} = \#$, $q_0$ is initial, and $q_{n+1}$ is final, add the rule
$S \rightarrow \nont {\#}{q_0}{q_{n+1}}{\#}$;
\\
\item 
for every support of type~(\ref{eq:compchain}) of a composed chain,
add the rule 
\[
\nont {a_0}{q_0}{q_{n+1}}{a_{n+1}} \longrightarrow N_0 a_1 N_1 a_2 \dots a_n N_n \ ;
\]
where, for every $i = 0,1, \dots, n$,
$N_i = \nont {a_i}{q_i}{q'_i}{a_{i+1}}$ if $x_i \neq \epsilon$ and $N_i =
\epsilon$ otherwise.
\end{itemize}

Notice that the above construction is effective thanks to the hypothesis of
$\dot=$-acyclicity of the OPM. This implies that the length of the r.h.s. is
bounded (see Section \ref{sec:prelim}); on the other hand,
the cardinality of the nonterminal alphabet is finite.   Hence there is only a finite
number of possible productions for $G$ and only a limited number of chains to be
considered.


\section{$\omega$-languages}
\label{sec:omega}

Having an operational model that defines Floyd Languages, it is now
straightforward to introduce extensions to $\omega$-languages.

For instance, the classical B\"uchi condition of acceptance can be easily adapted to
FAs. Consider an infinite word $x \in \Sigma^\omega$, and an {\em infinite computation} 
of the automaton $\mathcal A_{M} = \langle \Sigma, M, Q, I, F, \delta \rangle $
on $x$, i.e. an $\omega$-sequence of configurations 
$\mathcal S = \config {\beta_0} {x_0} \config {\beta_1} {x_1} \ldots$, such that  
$\config {\beta_0} {x_0} = \config {\stack 0\#{q_I}} {x}$, $q_I \in I$ and 
$\config {\beta_i} {x_i} \vdash \config {\beta_{i+1}} {x_{i+1}}$. 
We say that $x \in L(\mathcal A)$ if and only if there exists 
$q_F \in F$ such that 
configurations with stack $\stack 0\#{q_F}$ occur infinitely often in $\mathcal
S$.

Quite naturally, $\omega$-VPLs are a proper subset of this class of languages,
as it is shown by the following example.

\begin{example}
We define here the stack management of a simple programming language that is able to
handle nested exceptions. For simplicity, there are only two procedures,
called $a$ and $b$. Calls and returns are denoted by $call_a$, $call_b$,
$ret_a$, $ret_b$, respectively.  
During execution, it is possible to install an exception handler $hnd$. The
last signal that we use is $rst$, that is issued when an exception occur,
or after a correct execution to uninstall the handler. With a $rst$ the stack is
``flushed'', restoring the state right before the last $hnd$. The automaton is
presented in Figure \ref{ex:omega} (notice that it is an extension of the
automaton in Figure~\ref{ex:primo}).
It is easy to modify this example to model the case of {\em unnested}
exceptions, to fit with other application contexts.
\end{example}


\begin{figure}
\begin{center}
\begin{tabular}{m{.4\textwidth}m{.45\textwidth}}
$
\begin{array}{c|cccccc}
      & call_a & ret_a & call_b & ret_b & hnd & rst  \\
\hline
call_a & \lessdot & \dot= & \lessdot &  & \lessdot & \gtrdot \\
ret_a  & \gtrdot & \gtrdot & \gtrdot & \gtrdot & \gtrdot & \gtrdot \\
call_b & \lessdot & & \lessdot & \dot=  & \lessdot & \gtrdot \\
ret_b  & \gtrdot & \gtrdot & \gtrdot & \gtrdot & \gtrdot & \gtrdot \\
hnd    & \lessdot & \lessdot & \lessdot & \lessdot & & \dot= \\
rst    & \gtrdot & \gtrdot & \gtrdot & \gtrdot \\
\#       & \lessdot & & \lessdot & & \lessdot \\
\end{array}
$
&
\begin{tikzpicture}[every edge/.style={draw,solid}, node distance=4cm, auto, 
                    every state/.style={draw=black!100,scale=0.5}, >=stealth]

\node[initial by arrow, initial text=,state,accepting] (S) {{\huge $q_0$}};
\node[state] (E) [right of=S, xshift=3cm] {{\huge $q_1$}};

\path[->]
(S) edge [bend left]  node {$call_a, call_b, hnd$} (E)
(E) edge [loop right, double] node {$q_1$} (E)
(E) edge [loop below] node {$call_a, ret_a, call_b, ret_b, hnd, rst$} (E)
(E) edge [below, double]  node {$q_0$} (S) ;
\end{tikzpicture}
\end{tabular} 
\caption{Precedence matrix and automaton for an $\omega$-language. There is no
column indexed by \# since words are infinite.}\label{ex:omega}
\end{center}
\end{figure}
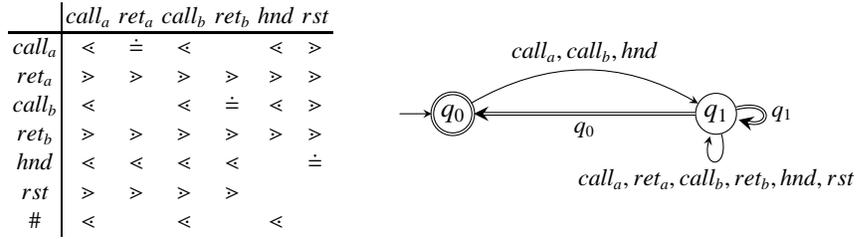


\section{Conclusions and further research}
\label{sec:conclusions}

Recently, we advocated that operator precedence grammars and languages, here renamed after their inventor Robert Floyd, deserve renewed attention in the realm of formal languages. The main reasons to support our claim are:

\begin{itemize}
\item
The fact that this family of languages properly includes visibly pushdown languages 
\cite{jacm/AlurM09}, a new family that has been proposed with the main
motivation of extending powerful model checking techniques beyond the limits of
finite state machines.
\item
The fact that it enjoys all closure properties with respect to the main
algebraic operations that are exhibited by regular languages and VPLs.
\item
The fact that, unlike other deterministic languages -either strictly more powerful than them, or incomparable with them- such as LR, LL, and simple precedence ones, FLs can be parsed without applying a strictly left-to-right order; this feature becomes particularly relevant in these days since it allows to exploit much better the gains in efficiency offered by massive parallelism.
\end{itemize}

\noindent
In this paper we filled a rather surprising ``hole'' in the theory of these languages, namely the lack of an appropriate family of automata that perfectly matches the generative power of their grammars. We defined FAs with such a goal in mind and we proved their equivalence with FGs. Both facts turned out to be non-trivial jobs and showed further interesting peculiarities of this pioneering family of deterministic languages.
	A first ``byproduct'' of the new automata family is the extension of FLs
to $\omega$-languages, i.e., languages consisting of infinite strings, a more
and more important aspect of formal language theory needed to deal with
never ending computations. In this case too FL $\omega$-languages proved to augment the descriptive capabilities of the original VPLs.

As a first step towards applicability of the results presented in this paper,
and also to validate our approach with several practical examples, we
implemented a simple prototypical tool, called {\em Flup}. Flup contains an
interpreter for non-deterministic Floyd Automata, and a Floyd Grammar to
Automata translator, that directly applies the construction presented in
Section~\ref{sec:gr2aut}.
 All the examples presented in the paper
were tried on, or generated by the tool.%
\footnote{The prototype is freely available at {\em
http://home.dei.polimi.it/pradella}.}

We are confident that suitable future research will further strengthen the
importance of, and motivation for, re-inserting FLs in the main stream of formal
language literature. In particular it would be interesting to complete the
parallel analysis and comparison with VPLs by investigating a characterization
in terms of suitable logic formulas \cite{jacm/AlurM09}; by this way motivation for, and application of, strong model checking techniques would be further enhanced.

\paragraph{Acknowledgement.}
We thank Federica Panella for her comments and suggestions, especially with respect to the construction of Theorem\~ref{th:nondet} and $\omega$-languages.

\bibliographystyle{plain}
\bibliography{VPDbib}

\newpage
\section*{Appendix: proof of Theorem~\ref{th:nondet}}

\paragraph{Notation.}
We use $J, \bar J, J', J_i \dots$ to denote states in $\tilde Q$ and $K, \bar K, K', K_i, \dots$ to denote set of pairs in $Q \times (Q \cup \{\bot\})$.
We use arrows $\va{}{}$ and $\flush{}{}$ to denote push and flush transitions, respectively, both in $\mathcal A$ and in $\tilde{\mathcal A}$.

\paragraph{Remarks}
\begin{enumerate}[i)]
\item
\label{rem:flush}
By the definition of $\delta_{\text{flush}}$, if 
$\pair{b}{\bar K} \flush{\pair{a}{K}}{\pair{a}{K'}}$ in $\tilde{\mathcal A}$
and $\stpair{q'}{p} \in K'$,
then
there exists a pair $\stpair r q \in \bar K$ such that $\stpair q p \in K$ and 
$r \flush{q}{q'}$ in $\mathcal A$.
\item
\label{rem:push1}
By the definition of $\delta_{\text{push}}$, if 
$\pair{b}{\bar K} \va{a}{\pair{a}{K}}$ in $\tilde{\mathcal A}$,
$\stpair rq \in K$,
and $b \doteq a$, then
there exists a state $\bar q \in Q$ such that $\bar q \va a r$ in $\mathcal A$
and $\stpair{\bar q}{q} \in \bar K$.
\item
\label{rem:push2}
By the definition of $\delta_{\text{push}}$, if 
$\pair{b}{\bar K} \va{a}{\pair{a}{K}}$ in $\tilde{\mathcal A}$,
$\stpair{\bar q}{q} \in K$, $\stpair qp \in \bar K$, 
and $b \lessdot a$, then
$q \va{a}{\bar q}$ in $\mathcal A$.
\end{enumerate}

\begin{lemma}
\label{lem:rem}
Let $\mathcal C = \chain{a}{y}{b}$ be a chain and let $\pair{a}{K} \ourpath{y} \pair{a'}{K'} $
be a support for $\mathcal C$ in $\mathcal{\tilde{A}}$. Then $a' = a$ and the support has the form
{\small
\begin{equation}
\label{eq:support}
\pair a K
\ourpath{x_0}{\pair {a}{\bar K_0}}
\va{a_1}{\pair {a_1}{K_1}}
\ourpath{x_1}{\pair {a_1}{\bar K_1}}
\va{a_2}{}
\dots
\va{a_n} {\pair{a_n}{K_n}}
\ourpath{x_n}{\pair {a_n}{\bar K_n}}
\flush{\pair {a}{\bar K_0}}{\pair{a}{K'}}
\end{equation}
}where $y = x_0 a_1 x_1 a_2 \dots a_n x_n $ and $\chain a {a_1 a_2 \dots a_n} b$ is a simple chain.
Any word $x_i$ may be empty and in this case $\bar K_i = K_i$.
\end{lemma}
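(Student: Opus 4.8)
The plan is to argue by induction on the length $|y|$ of the word read by the support, which coincides with an induction on the structural complexity of the chain $\mathcal{C}$, since the sub-chains occurring in a composed chain read strictly shorter words. Throughout, the key observation is that $\tilde{\mathcal A}$ and $\mathcal A$ share the same precedence matrix $M$, so the relations $a \lessdot a_1$, $a_i \doteq a_{i+1}$ and $a_n \gtrdot b$ built into the definition of a (simple) chain dictate, in $\tilde{\mathcal A}$ exactly as in $\mathcal A$, whether each move is a mark, a plain push, or a flush. Since the statement only concerns the first (symbol) components of the states and the overall shape of the path, I will track first components and leave aside the bookkeeping of the second components of the pairs in each $K$.

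First I would treat the base case, in which $\mathcal C = \chain{a}{a_1 a_2 \dots a_n}{b}$ is a simple chain and the support reduces to the $n$ pushes on $a_1, \dots, a_n$ followed by a single flush, as in (\ref{eq:simplechain}). Reading off the definition of $\tilde{\delta}_{\text{push}}$, each push on a symbol $c$ produces a state whose first component is exactly $c$; hence after the mark push on $a_1$ (because $a \lessdot a_1$) and the subsequent plain pushes on $a_2, \dots, a_n$ (because $a_i \doteq a_{i+1}$) the path sits in a state $\pair{a_n}{K_n}$. By the definition of a support for a simple chain, the final flush is labelled by the first state of the path, namely $\pair{a}{K}$; and by the definition of $\tilde{\delta}_{\text{flush}}$ the first component of a flush result equals the first component of its second (flushed) argument, which here is $a$. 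This yields $a' = a$ and the claimed form with all $x_i$ empty.

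For the inductive step, $\mathcal C = \chain{a}{x_0 a_1 x_1 \dots a_n x_n}{b}$ is composed over the simple backbone $\chain{a}{a_1 \dots a_n}{b}$. By the definition of a support for a composed chain, the path splits into segments reading the $x_i$ that, whenever $x_i \neq \epsilon$, are themselves supports for the sub-chains $\chain{a_i}{x_i}{a_{i+1}}$, interleaved with the backbone pushes on $a_1, \dots, a_n$ and terminated by the closing flush. Each sub-chain reads a word with $|x_i| < |y|$, so the induction hypothesis applies and guarantees that every such segment begins and ends at states sharing the \emph{same} first component. It then remains to identify these components: the segment reading $x_0$ starts at $\pair{a}{K}$, hence ends at some $\pair{a}{\bar K_0}$; the mark push on $a_1$ (since $a \lessdot a_1$) moves to $\pair{a_1}{K_1}$, so the segment reading $x_1$ starts and ends at first component $a_1$; and inductively the plain push on each $a_i$ (since $a_{i-1} \doteq a_i$) sets the first component to $a_i$. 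This reproduces exactly the shape (\ref{eq:support}), with the convention $\bar K_i = K_i$ when $x_i = \epsilon$ (no move is made). Finally, the closing flush is labelled by the state $q'_0 = \pair{a}{\bar K_0}$ reached after reading $x_0$, so, as in the base case, its result has first component $a$ and $a' = a$ follows.

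I expect the main obstacle to be essentially bookkeeping: verifying that the decomposition of an arbitrary $\tilde{\mathcal A}$-support matches the support form (\ref{eq:compchain}) move by move, and in particular pinning down which state labels the terminal flush so that the equality $a' = a$ propagates correctly through the nesting. Once the correspondence between the precedence relations and the mark/push/flush discipline is made explicit, the determination of the first components is forced, and the rest is a routine transcription of the $\tilde{\delta}$ definitions.
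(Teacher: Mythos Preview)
Your proposal is correct and follows essentially the same approach as the paper: a structural induction showing that $\tilde\delta_{\text{push}}$ forces the first component to be the pushed symbol and $\tilde\delta_{\text{flush}}$ forces it to be the first component of its second argument, whence $a'=a$ at every level of nesting. The only cosmetic difference is the induction parameter---you use the length $|y|$ (noting $|x_i|<|y|$), whereas the paper inducts on the number of flush transitions in the support; both decrease strictly on the sub-supports and lead to the same case split between simple and composed chains.
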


\begin{proof}
We argue by induction on the number of flush transitions in the support.
If there is only one flush transition, then the chain is simple, i.e.
$y = a_1 a_2 \dots a_n$ with $a \lessdot a_1 \doteq a_2 \doteq \dots \doteq a_n \gtrdot b$, and 
by the definition of $\tilde\delta_{\text{push}}$, the support can be rewritten as\begin{equation}
\label{eq:supportsimple}
\pair{a}{K}
\va{a_1}{\pair{a_1}{K_1}}
\va{a_2}{}
\dots
\va{a_{n-1}}\pair{a_{n-1}}{K_{n-1}}
\va{a_{n}}\pair{a_n}{K_n}
\flush{\pair{a}{K}} \pair{a'}{K'}
\end{equation}
By the definition of $\tilde\delta_{\text{flush}}$, we get $a'=a$.

Now assume that the statement holds for supports with $k$ flush transitions at most.
Let
$y = x_0 a_1 x_1 a_2 \dots a_n x_n $, where $\chain {a}{a_1 a_2 \dots a_n}{b}$ is a simple chain,
and consider the support
\[
\pair a K
\ourpath{x_0}{{\bar J}_0}
\va{a_1}{\pair {a_1}{K_1}}
\ourpath{x_1}{{\bar J}_1}
\va{a_2}{}
\dots
\va{a_n} {\pair{a_n}{K_n}}
\ourpath{x_n}{{\bar J}_n}
\flush{{\bar J}_0}{\pair{a'}{K'}}
\]
where,
for every $i =  0,1,2,\dots n$, the support labeled by $x_i$ contains $k$ flush transitions at most.
The inductive hypothesis implies that $\bar J_i$ has the form $\pair{a_i}{\bar K_i}$ for some $\bar K_i$ (where $a_0 =a$).
In particular the state $\bar J_0$
has the form $\pair {a}{\bar K_0}$ hence, by the definition of $\tilde\delta_{\text{flush}}$, we have $a'=a$.
\qed
\end{proof}

\begin{lemma}
\label{lem:chainnondet}
Let $\mathcal C = \chain{a}{y}{b}$ be a chain
and $q \ourpath{y}{q'}$ be a support for $\mathcal C$ in $\mathcal{A}$.
Then, for every $p \in Q$ and 
$K \subseteq Q\times(Q \cup \{\bot\})$, if $K \ni \stpair q p$, there exists a support 
\[\pair{a}{K} \ourpath{y} \pair{a}{K'} 
\]
for $\mathcal C$ in $\mathcal{\tilde{A}}$ with $K' \ni \stpair{q'}{p}$.
\end{lemma}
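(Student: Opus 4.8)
The plan is to prove the statement by induction on the number of flush transitions occurring in the support $q \ourpath{y} q'$ in $\mathcal{A}$ (equivalently, on the nesting depth of the chain $\mathcal{C}$), mirroring the structure already used in Lemma~\ref{lem:rem}. Throughout the construction I would maintain the invariant that the pair carried by $\tilde{\mathcal A}$ keeps its second component pinned to the state sitting immediately below the currently open mark, while its first component tracks the state reached by the corresponding run of $\mathcal A$; the desired conclusion $K' \ni \stpair{q'}{p}$ then falls out automatically once the outermost flush collapses this bookkeeping back onto $p$.

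For the base case I would take $\mathcal C = \chain{a}{a_1 a_2 \dots a_n}{b}$ simple, so that the support reads $q_0 \va{a_1} q_1 \va{a_2}{} \dots \va{a_n} q_n \flush{q_0} q_{n+1}$ with $q = q_0$ and $q' = q_{n+1}$. Starting from $\pair{a}{K}$ with $\stpair{q_0}{p} \in K$, the first transition is a mark (since $a \lessdot a_1$), and by the definition of $\tilde\delta_{\text{push}}$ the $\lessdot$-branch sends $\stpair{q_0}{p}$ to $\stpair{q_1}{q_0}$; each subsequent push ($a_i \doteq a_{i+1}$) falls in the $\doteq$-branch and hence preserves the second component, yielding $\stpair{q_i}{q_0}$ at every step. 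The final flush is labelled by the base configuration $\pair{a}{K}$, so by the definition of $\tilde\delta_{\text{flush}}$ I would pair the top entry $\stpair{q_n}{q_0}$ against the base entry $\stpair{q_0}{p}$ (they match on the middle state $q_0$), and since $q_{n+1} \in \delta_{\text{flush}}(q_n,q_0)$ this produces exactly $\stpair{q_{n+1}}{p} \in K'$.

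For the inductive step I would take $\mathcal C$ composed, $y = x_0 a_1 x_1 \dots a_n x_n$, and process the support segment by segment along its canonical decomposition $q_0 \ourpath{x_0} q'_0 \va{a_1} q_1 \ourpath{x_1} q'_1 \dots \va{a_n} q_n \ourpath{x_n} q'_n \flush{q'_0} q_{n+1}$. On each nonempty factor $x_i$ the sub-path $q_i \ourpath{x_i} q'_i$ is a support for the shorter chain $\chain{a_i}{x_i}{a_{i+1}}$ (strictly fewer flushes), so the induction hypothesis lifts it to a $\tilde{\mathcal A}$-support that leaves the symbol unchanged (by Lemma~\ref{lem:rem}), preserves the second component, and advances the first from $q_i$ to $q'_i$; on each connecting letter $a_{i+1}$ I would apply $\tilde\delta_{\text{push}}$ exactly as in the base case. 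The outermost flush is then handled identically: the top entry $\stpair{q'_n}{q'_0}$ is matched against the base entry $\stpair{q'_0}{p}$ carried through $x_0$, and since $q_{n+1} \in \delta_{\text{flush}}(q'_n,q'_0)$ we again obtain $\stpair{q'}{p} = \stpair{q_{n+1}}{p} \in K'$.

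The main obstacle I anticipate is the bookkeeping of the second component rather than any deep argument: one must verify that the $\lessdot$ versus $\doteq$ case split in $\tilde\delta_{\text{push}}$ always keeps the correct ``state below the mark'' available, and that at the flush the matching condition of $\tilde\delta_{\text{flush}}$ (second component of the top pair equal to first component of the base pair) is satisfied by precisely the pairs produced by the construction. Invoking Lemma~\ref{lem:rem} to guarantee that the symbol component stays constant along each $x_i$ is what makes the intermediate push and flush transitions well-typed, so that the segment-by-segment concatenation actually yields a genuine support in $\tilde{\mathcal A}$.
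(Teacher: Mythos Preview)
Your proposal is correct and follows essentially the same approach as the paper: induction on the number of flush transitions in the support, with the base case handling simple chains by tracking pairs $\stpair{q_i}{q_0}$ through the $\doteq$-pushes and then matching at the flush, and the inductive step decomposing a composed chain into its canonical form, applying the hypothesis to each inner support $q_i \ourpath{x_i}{q'_i}$, and using Lemma~\ref{lem:rem} to control the symbol component. The only cosmetic difference is that the paper first writes down the (unique, since $\tilde{\mathcal A}$ is deterministic) run by iterating $\tilde\delta$ and then verifies the chain of containments $\bar K_0 \ni \stpair{q'_0}{p}$, $K_1 \ni \stpair{q_1}{q'_0}$, \dots, $K' \ni \stpair{q'}{p}$, whereas you build the run segment by segment; the content is identical.
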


\begin{proof}
We argue by induction on the number of flush transitions contained in the support $q \ourpath{y}{q'}$.
If there is only one flush transition, then 
$y = a_1 a_2 \dots a_n$ with $a \lessdot a_1 \doteq a_2 \doteq \dots \doteq a_n \gtrdot b$ and the support can be rewritten as
\[
q = q_0
\va{a_1}{q_1}
\va{a_2}{}
\dots
\va{a_{n-1}}q_{n-1}
\va{a_{n}}{q_n}
\flush{q_0} {q'}
\]
Set $K_0 = K$, $a_0 = a$, and 
\begin{eqnarray*}
\pair{a_i}{K_i} &=& \tilde\delta_{\text{push}} ( \pair{a_{i-1}}{K_{i-1}} , a_i )  , \text{ for every } i=1, 2, \dots, n\\
\pair{a}{K'} &=& \tilde\delta_{\text{flush}} (\pair{a_n}{K_n}, \pair a K)
\end{eqnarray*}
Then 
\[
\pair{a}{K}
\va{a_1}{\pair{a_1}{K_1}}
\va{a_2}{}
\dots
\va{a_{n-1}}\pair{a_{n-1}}{K_{n-1}}
\va{a_{n}}\pair{a_n}{K_n}
\flush{\pair{a}{K}} \pair{a}{K'}
\]
is a support for $\mathcal C$ in $\mathcal{\tilde{A}}$.
Moreover, since $K \ni \stpair q p$, by the definition of $\tilde\delta$ we have:
\[
\begin{array}{ll}
K_1 \ni \stpair{q_1}{q} 
	&\text{since } a \lessdot a_1 \text{ and } \delta_{\text{push}}(q,a_1) \ni q_1\\
K_i \ni \stpair{q_i}{q} 
	&\text{since } a_{i-1} \doteq a_i \text{ and } \delta_{\text{push}}(q_{i-1},a_i) \ni q_i\\
K' \ni \stpair{q'}{p}   
	&\text{since  } a_n \gtrdot b \text{ and } \delta_{\text{flush}}(q_n,q) \ni q'
\end{array}
\]

Now assume that the statement holds for supports with $k$ flush transitions at most.
Let
$y = x_0 a_1 x_1 a_2 \dots a_n x_n $, where $\chain {a}{a_1 a_2 \dots a_n}{b}$ is a simple chain,
and consider the support
\[
q = q_0
\ourpath{x_0}{{\bar q}_0}
\va{a_1}{q_1}
\ourpath{x_1}{{\bar q}_1}
\va{a_2}{}
\dots
\va{a_n} {q_n}
\ourpath{x_n}{{\bar q}_n}
\flush{{\bar q}_0}{q'}
\]
where
${\bar q}_i = q_i$ whenever $x_i$ is the empty word and,
for every $i =  0,1,2,\dots n$, the support labeled by $x_i$ contains $k$ flush transitions at most.

Set 
$J_0 = \pair{a}{K}$ and 
\begin{eqnarray*}
\bar J_i &=& \tilde\delta (J_i,x_i) \text{ for every } i= 0, 1, \dots, n\\
J_i &=& \tilde\delta_{\text{push}} (\bar J_{i-1},a_i) \text{ for every } i=1, 2, \dots, n\\
J' &=& \tilde\delta_{\text{flush}} (\bar J_n,	\bar J_0).
\end{eqnarray*}
Then
\begin{equation}
\label{eq:supportJ}
\pair a K
\ourpath{x_0}{{\bar J}_0}
\va{a_1}{J_1}
\ourpath{x_1}{{\bar J}_1}
\va{a_2}{}
\dots
\va{a_n} {J_n}
\ourpath{x_n}{{\bar J}_n}
\flush{{\bar J}_0}{{J'}}
\end{equation}
is a support of $\mathcal C$ in $\mathcal{\tilde A}$.
By Lemma~\ref{lem:rem}, there exist $K_i, \bar K_i, K'$ such that 
$\bar J_i = \pair {a_i}{\bar K_i}$, $J_i = \pair {a_i}{K_i}$,  and $J' = \pair {a}{K'}$, where $a_0 = a$,
i.e., the support is \eqref{eq:support}. 

Moreover, since $K \ni \stpair q p$, by the definition of $\tilde\delta$ we have:
\[
\begin{array}{ll}
\bar K_0 \ni  \stpair {\bar q_0} p &
	\text{by inductive hypothesis on the support } q = q_0 \ourpath{x_0}{{\bar q}_0}\\
K_1 \ni \stpair {q_1} {\bar q_0} & 
	\text{since } a_0 \lessdot a_1 \text{ and }\delta_{\text{push}}({\bar q}_0,a_1) \ni q_1\\
\bar K_1 \ni \stpair {\bar q_1}{\bar q_0} &
	\text{by inductive hypothesis on the support }q_1 \ourpath{x_1}{{\bar q}_1}\\
K_i \ni \stpair{q_i}{\bar q_0}&
	\text{since } a_{i-1} \doteq a_i \text{ and } \delta_{\text{push}}({\bar q}_{i-1},a_i) \ni q_i\\
\bar K_i \ni \stpair{\bar q_i}{\bar q_0} &
	\text{by inductive hypothesis on the support } q_i \ourpath{x_i}{{\bar q}_i}\\
K' \ni \stpair{q'}{p} &
	\text{since } \delta_{\text{flush}}(q_n,{\bar q}_0) \ni q'
\end{array}
\]
\qed
\end{proof}

\begin{lemma}
\label{lem:chaindet}
Let $\mathcal C = \chain{a}{y}{b}$ be a chain 
and  $\pair a K \ourpath{y}\pair{a}{K'}$ be a support for $\mathcal C$ in $\mathcal{\tilde{A}}$. Then, for every $p,q' \in Q$,
if $K' \ni \stpair{q'}{p}$ there exists a support 
$q \ourpath{y}{q'}$ for $\mathcal C$ in $\mathcal{A}$ 
with $\stpair q p \in K$.
\end{lemma}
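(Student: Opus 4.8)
The plan is to mirror the proof of Lemma~\ref{lem:chainnondet}, but to run the correspondence \emph{backwards}: instead of propagating a pair forward along a deterministic support, I start from the pair $\stpair{q'}{p} \in K'$ at the right end and reconstruct a concrete support of $\mathcal A$ one state at a time, reading the $\tilde{\mathcal A}$-support from right to left. The three Remarks are exactly the atomic backward steps I need: Remark~\ref{rem:flush} turns a flush of $\tilde{\mathcal A}$ into a flush of $\mathcal A$, while Remarks~\ref{rem:push1} and~\ref{rem:push2} do the same for push transitions in the $\doteq$ and $\lessdot$ cases. As in Lemma~\ref{lem:chainnondet}, I argue by induction on the number of flush transitions in the support, first invoking Lemma~\ref{lem:rem} to put the $\tilde{\mathcal A}$-support into the canonical form \eqref{eq:support}.

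For the base case (one flush) the chain is simple and the support is \eqref{eq:supportsimple}. Given $\stpair{q'}{p} \in K'$, I apply Remark~\ref{rem:flush} to the final flush $\flush{\pair aK}{\pair a{K'}}$, obtaining a pair $\stpair{q_n}{q} \in K_n$ with $\stpair qp \in K$ and a flush $q_n \flush q{q'}$ in $\mathcal A$. I then walk back through the $n-1$ equal-precedence pushes: at each step the membership $\stpair{q_i}{q} \in K_i$ together with $a_{i-1}\doteq a_i$ lets me apply Remark~\ref{rem:push1} to obtain $q_{i-1}$ with $q_{i-1}\va{a_i}{q_i}$ and $\stpair{q_{i-1}}{q}\in K_{i-1}$, the base component $q$ being carried along unchanged. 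For the first (marking) push I have $a \lessdot a_1$, $\stpair{q_1}{q}\in K_1$, and $\stpair qp \in K$, which is precisely the hypothesis of Remark~\ref{rem:push2} and yields $q\va{a_1}{q_1}$. Assembling these transitions gives the support $q \va{a_1}{q_1} \dots \va{a_n}{q_n} \flush q{q'}$ in $\mathcal A$, whose starting state $q$ satisfies $\stpair qp \in K$.

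For the inductive step the support has the form \eqref{eq:support} with $y = x_0 a_1 x_1 \dots a_n x_n$ and $\chain a{a_1 a_2 \dots a_n}b$ simple, and each sub-support labelled $x_i$ contains strictly fewer flushes. I proceed exactly as above on the skeleton pushes and the final flush, but each time I reach the state $\pair{a_i}{\bar K_i}$ entered after reading $x_i$, I hand the membership just established to the induction hypothesis applied to the sub-support $\pair{a_i}{K_i}\ourpath{x_i}\pair{a_i}{\bar K_i}$, obtaining a concrete $\mathcal A$-support $q_i \ourpath{x_i}{q'_i}$ together with the pair membership at its left end that feeds the next backward step. Concretely, Remark~\ref{rem:flush} on the final flush gives $\stpair{q'_0}{p}\in\bar K_0$ and the flush $q'_n \flush{q'_0}{q'}$; the alternation of Remarks~\ref{rem:push1}/\ref{rem:push2} with the induction hypothesis threads the common base pointer $q'_0$ leftwards through all the $x_i$ and $a_i$, and the concluding use of Remark~\ref{rem:push2} at the first mark consumes exactly the membership $\stpair{q'_0}{p}\in\bar K_0$ produced by the flush. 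Gluing everything yields a support of shape \eqref{eq:compchain} in $\mathcal A$ whose leftmost state $q_0$ satisfies $\stpair{q_0}{p}\in K$.

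The main obstacle I anticipate is purely bookkeeping: keeping the two components of each pair straight, and in particular verifying that the base pointer $p$ (respectively $q'_0$) demanded as a \emph{precondition} by Remark~\ref{rem:push2} at the leftmost mark is exactly the one \emph{produced} by Remark~\ref{rem:flush} at the rightmost flush. This matching of the $\lessdot$-push's source pair with the pair recovered from the flush is what makes the backward reconstruction close up into a genuine support; once that invariant is stated correctly, each individual step is an immediate application of one Remark or of the induction hypothesis.
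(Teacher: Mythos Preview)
Your proposal is correct and follows essentially the same route as the paper's own proof: induction on the number of flush transitions, normalisation of the $\tilde{\mathcal A}$-support via Lemma~\ref{lem:rem}, then a right-to-left reconstruction that uses Remark~\eqref{rem:flush} on the final flush, Remark~\eqref{rem:push1} on the $\doteq$-pushes, Remark~\eqref{rem:push2} on the single $\lessdot$-push, and the induction hypothesis on each $x_i$-subsupport. The only point to make explicit in a full write-up is that after applying Remark~\eqref{rem:push2} at $a_1$ you still need one more invocation of the induction hypothesis on the leftmost sub-support $\pair aK \ourpath{x_0}\pair{a}{\bar K_0}$ to land on a state $q$ with $\stpair qp \in K$; your final sentence indicates you have this in mind, but it is worth stating separately since its base component is $p$ rather than $q'_0$.
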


\begin{proof}
We argue by induction on the number of flush transitions contained in the support $\pair a K \ourpath{y}\pair{a}{K'}$.
If there is only one flush transition, then 
$y = a_1 a_2 \dots a_n$ with $a_0 \lessdot a_1 \doteq a_2 \doteq\dots\doteq a_n \gtrdot a_{n+1}$ and the support can be rewritten as
in \eqref{eq:supportsimple}.
Let $K' \ni \stpair{q'}{p}$;
then, by remark~\eqref{rem:flush}
there exists a pair $\stpair{q_n}{q} \in K_n$ such that
$\stpair qp \in K$ and $q_n \flush{q}{q'}$ in $\tilde{\mathcal A}$. 
Moreover, $\stpair{q_n}{q} \in K_n$, 
$\pair{a_{n-1}}{K_{n-1}} \va{a_n}{\pair{a_n}{K_n}}$ and $a_{n-1} \doteq a_n$ imply by remark~\eqref{rem:push1}
the existence of a state $q_{n-1} \in Q$ such that 
$\stpair{q_{n-1}}{q} \in K_{n-1}$ and
$q_{n-1}\va{a_n}{q_n}$.
Similarly one can verify that for every $i = n-2, \dots 1$ there exists $q_i \in Q$ such that
$\stpair{q_i}{q} \in K_i$ and
$q_i \va{a_{i+1}}{q_{i+1}}$.
In particular, $\pair aK \va{a_1}{\pair{a_1}{K_1}}$,
$\stpair{q_1}{q} \in K_1$,
$\stpair qp \in K$,
and $a \lessdot a_1$ imply by Remark~\eqref{rem:push2} that $q \va{a_1}{q_1}$ in $\mathcal A$.
Thus, we built backward a path
\[
q
\va{a_1}{q_1}
\va{a_2}{q_2}
\va{a_3}{}
\dots
\va{a_n}{q_n}
\flush{q} {q'}
\]
with $\stpair q p \in K$,
and this concludes the proof of induction basis.

Now assume that the statement holds for supports with $k$ flush transitions at most.
Let
$y = x_0 a_1 x_1 a_2 \dots a_n x_n $, where $\chain {a}{a_1 a_2 \dots a_n}{b}$ is a simple chain,
and consider a support of the form
\[
\pair a K
\ourpath{x_0}{{\bar J}_0}
\va{a_1}{J_1}
\ourpath{x_1}{{\bar J}_1}
\va{a_2}{}
\dots
\va{a_n} {J_n}
\ourpath{x_n}{{\bar J}_n}
\flush{{\bar J}_0}{{\pair {a} {K'}}}
\]
where
${\bar q}_i = q_i$ whenever $x_i$ is the empty word and,
for every $i =  0,1,2,\dots n$, the support labeled by $x_i$ contains $k$ flush transitions at most.
Then by Lemma~\ref{lem:rem}
the support can be rewritten as in~\eqref{eq:support}.

Let $\stpair{q'}{p} \in K'$. 
Since $\pair{a_n}{\bar K_n} \flush{\pair{a}{\bar K_0}}{\pair{a}{K'}}$, by Remark~\eqref{rem:flush} there exists
a pair $\stpair{\bar q_n}{\bar q_0} \in \bar K_n$ with $\stpair{\bar q_0}{p} \in \bar K_0$ and
$\bar q_n \flush{\bar q_0}{q'}$ in $\tilde{\mathcal A}$.
By the inductive hypothesis, since $\stpair{\bar q_n}{\bar q_0} \in \bar K_n$ there exists a support $q_n \ourpath{x_n}{\bar q_n}$ with $\stpair{q_n}{\bar q_0}\in K_n$.

Similarly one can see that, for all $i = n-1, \dots 2,1$, there exist $\bar q_i$ and $q_i$  such that
\[
q_i \ourpath{x_i}{\bar q_i} \va{a_{i+1}}{q_{i+1}}
\]
with 
$
\stpair{\bar q_i}{\bar q_0} \in \bar K_i
$
by Remark~\eqref{rem:push1} (since $\pair{a_i}{\bar K_i} \va{a_{i+1}}{\pair{a_{i+1}}{K_{i+1}}}$ in $\tilde{\mathcal A}$, 
$\stpair {q_{i+1}}{\bar q_0} \in K_{i+1}$,
and $a_i \lessdot a_{i+1}$), and
$
\stpair{q_i}{\bar q_0} \in K_i
$
by the inductive hypothesis (since $\pair{a_i}{K_i} \ourpath{x_i}{\pair{a_i}{\bar K_i}}$ in $\tilde{\mathcal A}$ and 
$\stpair {\bar q_i}{\bar q_0} \in \bar K_i$).

In particular $q_1 \ourpath{x_1}{\bar q_1}$ with $\stpair{q_1}{\bar q_0} \in K_1$. Then, since also $\pair{a}{\bar K_0} \va{a_1}{\pair{a_1}{K_1}}$,
$\stpair{\bar q_0}{p} \in \bar K_0$, 
and $a \lessdot a_1$, 
by Remark~\eqref{rem:push2} we get $\bar q_0 \va{a_1}{q_1}$.
Finally, since $\stpair{\bar q_0}{p} \in \bar K_0$ and $\pair aK \ourpath{x_0}{\pair{a}{\bar K_0}}$, the inductive hypothesis implies the existence of a state $q\in Q$ such that
$q \ourpath{x_0}{\bar q_0}$ in $\tilde{\mathcal A}$ with $\stpair qp \in K$.
Hence we built a support
\[
q
\ourpath{x_0}{{\bar q}_0}
\va{a_1}{q_1}
\ourpath{x_1}{{\bar q}_1}
\va{a_2}{}
\dots
\va{a_{n-1}}{q_{n-1}}
\ourpath{x_{n-1}}{\bar q_{n-1}}
\va{a_n} {q_n}
\ourpath{x_n}{{\bar q}_n}
\flush{{\bar q}_0}{q'}
\]
with $\stpair qp \in K$ and this concludes the proof.
\qed
\end{proof}

We are now ready to prove Theorem~\ref{th:nondet}, i.e., we prove that there exists an accepting computation for $y$ in $\mathcal A$
if and only if there exists an accepting computation for $y$ in $\mathcal{\tilde A}$.

Let $c$ be an accepting computation for $y$ in $\mathcal A$.
Then for $K = I \times \{\bot\} \ni \stpair{q_0}{\bot}$ Lemma~\ref{lem:chaindet} implies
the existence of a support $\tilde I = \pair{\#}{K} \ourpath{y}{\pair{\#}{K'}}$ for $y$ in $\mathcal{\tilde A}$ 
with $K' \ni \stpair{q'}{\bot}$. $q' \in F$ implies $\pair{\#}{K'}\in \tilde F$, hence the support defines an accepting computation for $y$ in $\mathcal{\tilde A}$.

Vice versa, let $c$ be an accepting computation for $y$ in $\mathcal{\tilde A}$. Then  
$\chain{\#}{y}{\#}$ is a chain that admits a support $\tilde I \ourpath{y}{J'}$ in $\mathcal{\tilde A}$, with $J' \in \tilde F$.
This means that there exists $q' \in F$ such that $\langle \#, q', \bot \rangle \in J'$. Hence, by Lemma~\ref{lem:chainnondet}, there exists a support 
$q \ourpath{y}{q'}$ in $\mathcal A$ with $\langle \#, q, \bot \rangle \in \tilde I$, and this implies $q \in I$.
Thus the support $q \ourpath{y}{q'}$ defines an accepting computation for $y$ in $\mathcal A$.
\qed

\end{document}